\newtheorem{theorem}{Theorem}
\newtheorem{lemma}{Lemma}
\newtheorem{proposition}{Proposition}
\newtheorem{corollary}{Corollary}
\newtheorem{definition}{Definition}
\newtheorem{assumption}{Assumption}
\newtheorem{remark}{Remark}
\newtheorem{example}{Example}
\newcommand{\myexpect}[1]{\mathsf{E}\left[#1\right]}
\newcommand{\myrank}[1]{\text{Rank}\left(#1\right)}
\newcommand{\myprob}[1]{\mathsf{Prob}\left[#1\right]}
\begin{document}
		
\title{Remote State Estimation with Smart Sensors over Markov Fading Channels}

\author{Wanchun Liu, Daniel E.\ Quevedo, Yonghui Li,  
	Karl Henrik Johansson 
	and Branka Vucetic
}

\maketitle

\begin{abstract}
\let\thefootnote\relax\footnote{W. Liu, Y. Li and B. Vucetic are with School of Electrical and Information Engineering, The University of Sydney, Australia.
	Emails:	\{wanchun.liu,\ yonghui.li,\ branka.vucetic\}@sydney.edu.au. 
D. E. Quevedo is with the School of Electrical Engineering and Robotics, Queensland University of Technology (QUT), Brisbane, Australia.	Email: daniel.quevedo@qut.edu.au.
K. H. Johansson is with School of Electrical Engineering and Computer Science, KTH Royal
Institute of Technology, Stockholm, Sweden. Email: kallej@kth.se.	
}
We consider a fundamental remote state estimation problem of discrete-time linear time-invariant (LTI) systems. A smart sensor forwards its local state estimate to a remote estimator over a time-correlated multi-state Markov fading channel, where the packet drop probability is time-varying and depends on the current fading channel state.
We establish a necessary and sufficient condition for mean-square stability of the remote estimation error covariance in terms of the state transition matrix of the LTI system, the packet drop probabilities in different channel states, and the transition probability matrix of the Markov channel states.
To derive this result, we propose a novel estimation-cycle based approach, and provide new element-wise bounds of matrix powers.  
The stability condition is verified by numerical results, and is shown more effective than existing sufficient conditions in the literature. We observe that the stability region in terms of the packet drop probabilities in different channel states can either be convex or non-convex depending on the transition probability matrix of the Markov channel states. 
Our numerical results suggest that the stability conditions for remote estimation may coincide for setups with a smart sensor and with a conventional one (which sends raw measurements to the remote estimator), though the smart sensor setup achieves a better estimation performance.
\end{abstract}

\begin{IEEEkeywords}
Estimation, Kalman filtering, linear systems, stability, mean-square error, Markov fading channel
\end{IEEEkeywords}

\section{Introduction}	
	
\subsection{Motivation}
In the long-term evolution of wireless applications from conventional wireless sensor networks (WSNs) to the Internet-of-Things (IoT) and the Industry 4.0, remote estimation is a key component~\cite{WSN,IoTestimation,Indusrtial40,Antonakoglou2018towards}.
Driven by Moore's Law, 
the accelerated development and adoption of smart sensor technology enables low-cost sensors with high computational capability~\cite{moore}.
Thus, in a number of remote estimation applications, it is practical to use smart sensors (e.g., with Kalman filters) to pre-estimate the dynamic states, and then send the estimated states rather than the raw measurements to the remote estimator. In the presence of communication constraints, the smart sensors provide better estimation performance than the conventional sensors that purely send raw measurement data to the remote estimator~\cite{schenato2008optimal}.

Unlike wired communications, wireless communications are unreliable and the channel status varies with time due to multipath propagation and shadowing caused by obstacles affecting the wave propagation.
The transition process of the fading channel states is usually modeled as a Markov process~\cite{Parastoo,markovchannel1,markovchannel2}, and different channel states lead to different packet drop probabilities of  transmissions.
The presence of an unreliable wireless communication channel degrades the estimation performance, and in some cases even lead to instability.
Whilst stability when using conventional sensors has been well investigated, see literature survey below, stability when using a smart sensor has been much less considered.
In this paper, we tackle the fundamental problem: what are necessary and sufficient conditions on system parameters that ensure stochastic stability  of a smart-sensor-based remote estimation system over a Markov fading channel?

\subsection{Related Works}
The existing work on remote estimation can be divided into two categories based on  the sensor's computational capability.

In the \textbf{conventional sensor scenario}, the sensor sends raw measurements to the remote estimator.
When considering a static wireless channel, where neither the transceivers nor the wireless environment are moving, the packet drop probability during the remote estimation process remains fixed, so the packet arrival process is a Bernoulli process. It was proven in~\cite{estimation_first} that there exists a critical packet drop probability, such that the mean estimation error covariance is bounded for all initial conditions and diverges for some initial condition if the packet drop probability is less or greater than the critical probability, respectively.
This result was further extended to a scenario with random packet delays in~\cite{schenato2008optimal}.
By modeling the packet arrival process as a Markovian binary switching process, sufficient conditions for stability in the sense of peak covariance were obtained in \cite{XieXie,minyihuang}.
For situations where the number of consecutive packet dropouts constitutes a bounded Markov process,  peak covariance stability was investigated in \cite{xiao2009kalman}.
By modeling the sequence of packet dropouts as a stationary
finite-order Markov process,  a necessary and  sufficient stability condition was obtained in the sense of mean estimation error covariance in~\cite{youfu11}.
In contrast to~\cite{XieXie,minyihuang,xiao2009kalman,youfu11}, which directly model packet-dropouts as a Markovian process and abstract away the underlying wireless channel, 
Markovian fading channel states were explicitly considered in~\cite{Quevedo}. A sufficient condition for  exponential stability was derived by using stochastic Lyapunov functions.
With the same multi-state Markov channel model as~\cite{Quevedo}, optimal transmit power allocation policy under different channel conditions was proposed in~\cite{chakravorty2019remote} to achieve the minimum remote estimation error.

More recently, closed-loop control systems over multi-state Markov channels were investigated in~\cite{lun2020impact,KangJIoT,hu2019co}. Under an ideal assumption of perfect sensor measurements, a necessary and sufficient stability condition was obtained in~\cite{lun2020impact}, where the sensor and the controller were co-located; a sufficient stability condition of a half-duplex control system was obtained in~\cite{KangJIoT}, where the controller applied a scheduling policy determining when to receive the sensor's packet or to transmit a control packet to the actuator. 
In~\cite{hu2019co}, sufficient stability conditions in terms of maximum allowable transmission interval of a nonlinear system was investigated.
The work~\cite{minero2012stabilization} focused on bit-rate limited error-free communication channels, where the number of bits to be transmitted in each time slot formed a Markov chain. By combining results from quantization theory with insights from Markov Jump Linear Systems, \cite{minero2012stabilization} examined how the quantization errors induced by finite-bit quantizers affect the control and estimation quality.

In the \textbf{smart sensor scenario}, an estimator (e.g., based on a Kalman filter) of the sensor side pre-processes the raw measurements, such that an estimate is transmitted to the remote estimator over the wireless channel.
It has been rigorously proved in \cite{schenato2008optimal} that smart sensor scenario performs better
than the conventional sensor scenario when taking into account the transmission
delay and failures.
However, unlike the conventional-sensor-based scenario, most of the theoretical research on smart-sensor-based remote estimation considered static channels and assumed independent and identically distributed (i.i.d.) packet dropouts~\cite{schenato2008optimal,shi2012optimal,shi2012scheduling,wu2018optimal,leong2017sensor,LiSecurity,DingSecurity,AlexSecurity,Kang2019ICC}.
In \cite{schenato2008optimal}, a necessary and sufficient condition for remote estimation stability was derived in the mean-square sense.
In \cite{shi2012optimal}, an optimal sensor power scheduling policy under a sum power constraint was obtained.
In \cite{shi2012scheduling}, the optimal transmission scheduling policy of two sensors each measuring the state of one of the two systems was obtained in a closed form, where the sensors shared a single wireless channel.
This work was extended to a multi-sensor-multi-channel scenario in~\cite{wu2018optimal}, where the optimal transmission schedule policy was obtained by solving a Markov decision process problem.
In~\cite{leong2017sensor}, an optimal event-triggered transmission policy of a multi-sensor-multi-channel remote estimation system was proposed with a combined design target: the estimation error and the energy consumption of sensor transmissions.

In addition, optimal smart sensor transmission scheduling policies for single and multiple wireless channel scenarios were investigated under the presence of jamming attacks in~\cite{LiSecurity} and~\cite{DingSecurity}, respectively; an optimal transmission scheduling policy under the presence of an eavesdropper was proposed in~\cite{AlexSecurity} to minimize the remote estimation error at the dedicated receiver while keeping the eavesdropper's estimation error as large as possible.

More recently, a remote estimation system with retransmissions was proposed in~\cite{Kang2019ICC}, where the smart sensor can decide whether to retransmit the unsuccessfully transmitted local estimate (with a longer latency) or to send a new estimate (with a lower reliability). The obtained optimal retransmission scheduling policy found the optimal balance between the transmission latency and reliability on the remote estimation performance.

\subsection{Contributions}
In this paper,
we investigate mean-square stability of smart-sensor-based  remote estimation over an error-prone multi-state time-homogeneous Markov channel.
The $M$-state fading-channel model under consideration introduces an unbounded Markov chain in the analysis of the remote estimation system, which presents some non-trivial challenges.
The main contributions are summarized as below.
\begin{enumerate}
	\item We derive a necessary and sufficient condition on the stability of a remote state estimation system in terms of the system matrix $\mathbf{A}$, the packet drop probabilities in different channel states $\{d_1,\dots,d_M\}$ and the matrix of the channel state transitions $\mathbf{M}$. The remote state estimation is mean-square stable if and only if $\rho^2(\mathbf{A})\rho(\mathbf{DM})<1$, where $\rho(\cdot)$ denotes the spectral radius, and $\mathbf{D}$ is the diagonal matrix generated by $\{d_1,\dots,d_M\}$.
	\item We derive asymptotic upper and lower bounds of the estimation error function in terms of the number of consecutive packet dropouts $i$, which are in the same order of $(\rho(\mathbf{A})+\epsilon)^i$ and $\rho^i(\mathbf{A})$, respectively, where $\epsilon$ is an arbitrarily small positive number.
\end{enumerate}	
To obtain these results, we propose a novel estimation-cycle based analytical approach. Moreover, we further develop the asymptotic theory of matrix power, which provides new element-wise bounds of matrix powers.

\begin{figure*}[t]
	\centering\includegraphics[scale=0.7]{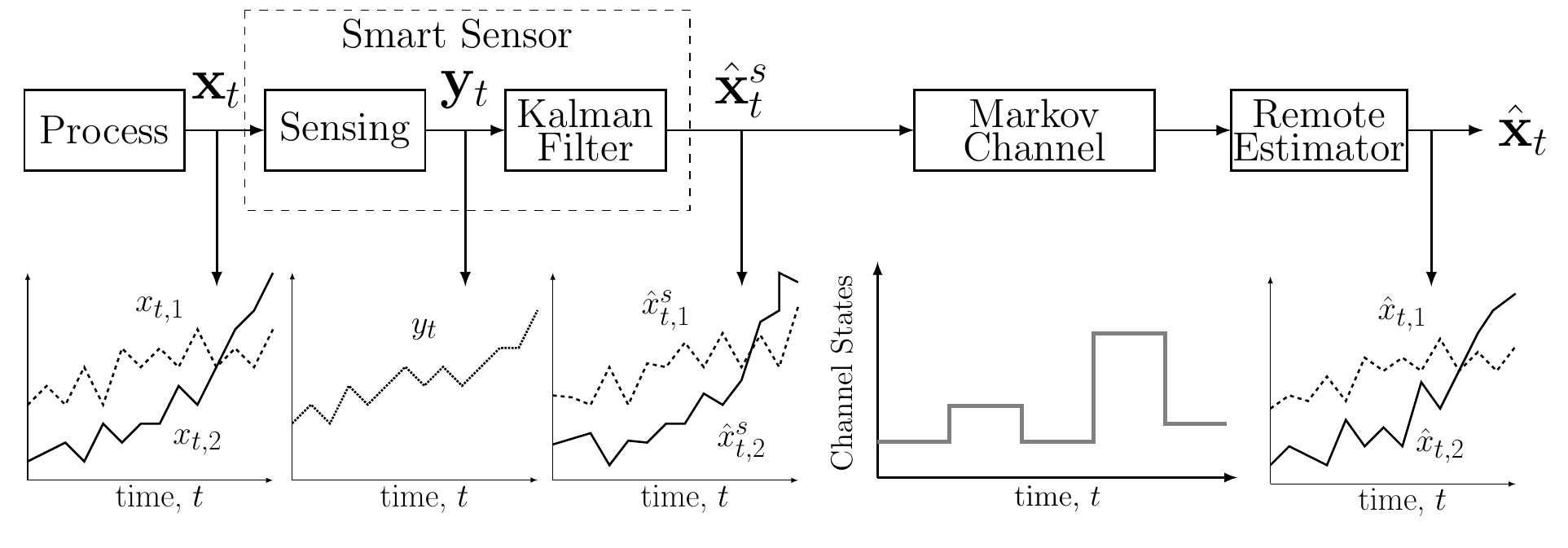}
	\vspace{-0.5cm}
	\caption{Remote state estimation system.}
	\label{retransmit_system_model}
\end{figure*}
\subsection{Outline and Notations}	
The remainder of this paper is organized as follows:
Section~\ref{sec:sys} presents the model of the remote estimation system using a Markov fading channel.
Section~\ref{sec:main_results} presents and discusses the main results of the paper.
Section~\ref{sec:proof} proposes a stochastic estimation-cycle based analysis approach and derives some element-wise bounds of matrix powers. They are used in 
Section~\ref{sec:main} to prove the main results.
Section~\ref{sec:num} numerically evaluates the performance of the remote estimation system, and verifies the theoretical results.
Section~\ref{sec:con} draws conclusions.

\emph{Notations:} 
Sets are denoted by calligraphic capital letters, e.g., $\mathcal{A}$.
$\mathcal{A} \backslash \mathcal{B}$ denotes set subtraction.
Matrices and vectors are denoted by capital and lowercase upright bold letters, e.g., $\mathbf{A}$ and $\mathbf{a}$, respectively.
$\vert \mathcal{A}\vert$ denotes the cardinality of the set $\mathcal{A}$.
 $\mathsf{E}\left[A\right]$ is the expectation of the random variable $A$.
$(\cdot)^{\top}$ is the matrix transpose operator. $\| \mathbf{v} \|_1$ is the sum of the vector $\mathbf{v}$'s elements. 
$|\mathbf{v}| \triangleq \sqrt{\mathbf{v}^\top \mathbf{v}}$ is the Euclidean norm of a vector $\mathbf{v}$.
$\text{Tr}(\cdot)$ is the trace operator. $\text{diag}\{v_1,v_2,...,v_K\}$ denotes the diagonal matrix with the diagonal elements $\{v_1,v_2,...,v_K\}$. $\mathbb{N}$ and $\mathbb{N}_0$ denote the sets of positive and non-negative integers, respectively.
$\mathbb{R}^m$ denotes the $m$-dimensional Euclidean space.
$\rho(\mathbf{A})$ is the spectral radius of $\mathbf{A}$, i.e., the  largest absolute value of its eigenvalues.
$\left[u\right]_{B \times B}$ denotes the $B \times B$ matrix with identical elements~$u$.
$[\mathbf{A}]_{j,k}$ denotes the element at the $j$th row and $k$th column of a matrix $\mathbf{A}$.
$\{v\}_{\mathbb{N}_0}$ denotes the semi-infinite sequence   $\{v_0,v_1,\cdots\}$.

\section{System Model} \label{sec:sys}
We consider a basic system setting wherein a smart sensor periodically samples, pre-estimates and sends its local estimation of a dynamic process to a remote estimator through a wireless link affected by random packet dropouts, as illustrated in Fig.~\ref{retransmit_system_model}.

\subsection{Process Model and Smart Sensor}
The discrete-time linear time-invariant (LTI) model is given as (see e.g., \cite{schenato2008optimal,shi2012optimal,yang2013schedule})
\begin{equation} \label{sys}
\begin{aligned}
\mathbf{x}_{t+1} &= \mathbf{A} \mathbf{x}_t + \mathbf{w}_t,\\
\mathbf{y}_t &= \mathbf{C}\mathbf{x}_t + \mathbf{v}_t,
\end{aligned}
\end{equation}
where 
$\mathbf{x}_t \in \mathbb{R}^n$ is the process state vector, $\mathbf{A} \in \mathbb{R}^{n \times n}$ is the state transition matrix, $\mathbf{y}_t \in \mathbb{R}^m$ is the measurement vector of the smart sensor attached to the process, $\mathbf{C} \in \mathbb{R}^{m \times n}$ is the measurement matrix, $\mathbf{w}_t \in \mathbb{R}^n$ and $\mathbf{v}_t \in \mathbb{R}^m$ are the process and measurement noise vectors, respectively. We assume $\mathbf{w}_t$ and $\mathbf{v}_t$ are independent and are identically distributed (i.i.d.) zero-mean Gaussian processes with corresponding covariance matrices $\mathbf{W}$ and $\mathbf{V}$, respectively. 
In this work, we focus on the stability condition of the remote estimation of the process $\mathbf{x}_t$ in the sense of average remote estimation mean-square error. Note that, if $\rho^2(\mathbf{A}) <1$, then the covariance of $\mathbf{x}_t$ is always bounded, and stability will trivially be satisfied.
Thus, as commonly done in this context, in the sequel we focus on the more interesting case with $\rho^2(\mathbf{A}) \geq 1$ indicating that the plant state grows up exponentially fast.
	
Note that in practice, feedback control of open-loop unstable LTI systems with Gaussian noise always requires sensors with unbounded measurement range, and adaptive zooming-in/zooming-out measurement range have been widely adopted~\cite{Nair}. Beyond the idealized situation of LTI systems, system models with exponentially growing modes are also obtained through linearization at unstable equilibria, see for example the Pendubot system in~\cite{AlexSecurity} and references therein.

Since the sensor's measurements are noisy, a smart sensor is used to estimate the state of the process, $\mathbf{x}_t$. For that purpose  a Kalman filter~\cite{shi2012optimal,yang2013schedule} is used, which gives the minimum estimation MSE, based on the current and previous raw measurements:
\begin{subequations}\label{sub:1}
	\begin{align}
	\mathbf{x}_{t|t-1}^s&=\mathbf{A} \mathbf{x}_{t-1|t-1}^s\\
	\mathbf{P}_{t|t-1}^s&=\mathbf{A} \mathbf{P}_{t-1|t-1}^s \mathbf{A}^{\top}+\mathbf{W}\\
	\mathbf{K}_t&=\mathbf{P}_{t|t-1}^s \mathbf{C}^{\top}(\mathbf{C} \mathbf{P}_{t|t-1}^s \mathbf{C}^{\top}+\mathbf{V})^{-1}\\
	\mathbf{x}_{t|t}^s&=\mathbf{x}_{t|t-1}^s+\mathbf{K}_t(\mathbf{y}_{t}-\mathbf{C} \mathbf{x}_{t|t-1}^s)\\
	\mathbf{P}_{t|t}^s&=(\mathbf{I}-\mathbf{K}_{t} \mathbf{C})\mathbf{P}_{t|t-1}^s
	\end{align}
\end{subequations}
where $\mathbf{I}$ is the $m \times m$ identity matrix, $\mathbf{x}^s_{t|t-1}$ is the prior state estimate, $\mathbf{x}^s_{t|t}$ is the posterior state estimate at time $t$, $\mathbf{K}_t$ is the Kalman gain. The matrices $\mathbf{P}^s_{t|t-1}$ and $\mathbf{P}^s_{t|t}$ represent the prior and posterior error covariance at the sensor at time $t$, respectively. The first two equations above present the prediction steps while the last three equations correspond to the updating steps~\cite{maybeck1979stochastic}.
Note that $\mathbf{x}^s_{t|t}$ is the output of the Kalman filter at time $t$, i.e., the pre-filtered measurement of~$\mathbf{y}_t$, with the estimation error covariance $\mathbf{P}_{t|t}^s$.

\begin{assumption}[\cite{schenato2008optimal,shi2012optimal,shi2012scheduling,wu2018optimal,leong2017sensor,yang2013schedule}]\label{assum}
	\normalfont
	$(\mathbf{A},\mathbf{C})$ is observable and $(\mathbf{A},\sqrt{\mathbf{W}})$ is controllable, i.e., the matrix concatenations $\left[\mathbf{C}^{\top},\mathbf{A}^{\top}\mathbf{C}^{\top},\cdots,(\mathbf{A}^n)^{\top}\mathbf{C}^{\top}\right]$ and $\left[\sqrt{\mathbf{W}},\mathbf{A}\sqrt{\mathbf{W}},\cdots,\mathbf{A}^n\sqrt{\mathbf{W}}\right]$ are of full rank.
\end{assumption}

Using Assumption~\ref{assum}, the local Kalman filter of system \eqref{sys} is stable, i.e.,
the error covariance matrix $\mathbf{P}_{t|t}^s$ converges to a finite matrix~$\bar{\mathbf{P}}_0$ as the time index $t$ goes to infinity~\cite{maybeck1979stochastic}.
In the rest of the paper, we assume that the local Kalman filter operates in the steady state~\cite{schenato2008optimal,shi2012optimal,shi2012scheduling,wu2018optimal,leong2017sensor,yang2013schedule}, i.e., $\mathbf{P}_{t|t}^s = \bar{\mathbf{P}}_0$. 
Further, to simplify   notation, the sensor's estimation $\mathbf{x}_{t|t}^s$ shall be denoted by $\hat{\mathbf{x}}_t^s$.

\subsection{Wireless Channel}   \label{sec:ARQ} 
The main characteristic of the wireless fading channel is that the channel quality is a time-varying random process that changes over time in a correlated manner~\cite{Parastoo,markovchannel1,markovchannel2}.
We consider a finite-state time-homogeneous Markov block-fading channel~\cite{Parastoo}.
It is assumed that the channel power gain $h_t>0$ remains constant during the $t$th time slot but may change slot by slot. We assume that the Markov channel has $M$ states, i.e, $$h_t \in\mathcal{B}\triangleq\{b_1,...,b_M\}.$$
The transition probability  from state $i$ to state $j$ is time-homogeneous and given by 
\begin{equation}
p_{i,j} \triangleq \myprob{ h_{t+1} =b_j \vert h_t =b_i}, \forall i,j \in \mathcal{M}, t\in\mathbb{N}_0,
\end{equation}
where $\mathcal{M}\triangleq \{1,\cdots,M\}$. 
The matrix of channel state transition probability is given as
\begin{equation} \label{M_matrx}
\mathbf{M} \triangleq \begin{bmatrix}
p_{1,1}  & \cdots &	p_{M,1}\\
\vdots  & \ddots &	\vdots\\
p_{1,M}  & \cdots &	p_{M,M}
\end{bmatrix}.
\end{equation}
We assume that all the channel states are \emph{aperiodic} and \emph{positive recurrent}. Thus, the Markov chain induced by $\mathbf{M}$ is \emph{ergodic}~\cite{durrett2019probability}.

We assume that the channel state information is available at both the sensor and the remote estimator, which can be achieved by standard channel estimation and feedback techniques, see e.g.~\cite{channelestimation} and the references therein.
Let $\gamma_t=1$ and $\gamma_t =0$ denote the successful and failed packet detection of the remote estimator during time slot $t$, respectively.
The packet drop probability in channel state $b_i$ is 
\begin{equation}
d_i \triangleq \myprob{\gamma_t = 0 \vert h_t=b_i},  \forall i \in \mathcal{M}, t\in \mathbb{N}_0.
\end{equation}
Note that 
the transmission is always perfect if $d_i=0,\forall i \in \mathcal{M}$, while no information-carrying packet is delivered to the remote estimator if $d_i=1,\forall i \in \mathcal{M}$. 
We define the packet drop probability matrix as
\begin{equation}\label{D_matrx}
\mathbf{D} \triangleq \text{diag}\{d_1,d_2,\cdots,d_M\}.
\end{equation}

\begin{example}
\normalfont
Suppose that the Markov channel has only two states, where the channel power gains, i.e., the effective signal-to-noise ratios (SNRs), are $b_1 = 300$ and $b_2=250$. Assume the estimate-carrying packet has $\zeta=200$ symbols and each symbol carries $R=8$ bits information. The minimum achievable packet drop rate is~\cite{Polyanskiy}
\begin{equation} \label{Yury_e}
\varepsilon \approx Q\left(\sqrt{\frac{\zeta}{\nu}}\left(C-R\right)\right),
\end{equation}
where $Q(x) = \frac{1}{2 \pi} \int_{x}^{\infty} \exp\left(-\frac{u^2}{2}\right) \mathrm{d}u$, 
$$
C = \log_2\left(1+h\right),\ 
\nu =h \frac{2+h}{\left(1+h\right)^2} \left(\log_2 e\right)^2,
$$
and $h$ is the SNR and $e$ is the Euler's number.
Taking $b_1$, $b_2$, $\zeta$ and $R$ into \eqref{Yury_e}, the packet drop probabilities are $d_1= 0.0039$ and $d_2=0.2584$.
\end{example}

In practice, the sensor can send a known sequence of symbols (called a pilot or channel training sequence) to the receiver, which can then estimate the channel power gain~\cite{tse2005fundamentals}. For the packet error probability at a certain channel condition, accurate value can be obtained by Monte Carlo simulation, i.e., sending a large sequence of packets to the receiver and calculate the ratio of failed packets.

\subsection{Remote Estimation and Stability Criteria}\label{sec:estimation}
The smart sensor sends its local estimate $\hat{\mathbf{x}}^s_t$ to the remote estimator at every time slot. Each packet transmission has a unit delay that is equal to the sampling period of the system.
For the considered fading model, packets may or may not arrive at the receiver due to the random packet dropouts. 
To account for packet transmission delays and the failures, the remote estimation of the current system states is based on the previously detected information packet.
The optimal remote estimator in the sense of minimum mean-square error (MMSE) can be obtained as \cite{schenato2008optimal,shi2012scheduling}
\begin{equation}\label{eq:estimation}
\hat{\mathbf{x}}_t = \begin{cases}
\mathbf{A} \hat{\mathbf{x}}_{t-1},& \gamma_{t-1}=0,\\
\mathbf{A} \hat{\mathbf{x}}^s_{t-1},& \gamma_{t-1}=1.
\end{cases}
\end{equation}

Assuming a packet was successfully received at time $t'$, and the following transmission consecutively failed for $\delta\geq 1$ times before the current time $t$, i.e., $t=t'+\delta + 1$, from \eqref{eq:estimation}, it can be obtained that $\hat{\mathbf{x}}_{t'+1}=\mathbf{A} \hat{\mathbf{x}}^s_{t'}$, $\hat{\mathbf{x}}_{t'+2}=\mathbf{A}^2 \hat{\mathbf{x}}^s_{t'}$, and $\hat{\mathbf{x}}_{t} = \hat{\mathbf{x}}_{t'+ \delta +1}=\mathbf{A}^{\delta +1} \hat{\mathbf{x}}^s_{t'}$.
Thus, \eqref{eq:estimation} can be uniformly written as
\begin{equation}\label{general_estimater}
\hat{\mathbf{x}}_t = \mathbf{A}^{\delta_t+1} \hat{\mathbf{x}}^s_{t-(\delta_t+1)},
\end{equation}
where $\delta_t \in \mathbb{N}_0$ is the number of consecutive packet dropouts before time slot $t$. In other words, $(\delta_t+1)$ can be treated as the age-of-information (AoI) of the remote estimator in time slot $t$~\cite{kaul2012real}.

Then, the estimation error covariance is given as
\begin{align} \label{covariance1}
\mathbf{P}_t 
&\triangleq  \mathsf{E}\left[(\hat{\mathbf{x}}_t-\mathbf{x}_t)(\hat{\mathbf{x}}_t-\mathbf{x}_t)^{\top}\right]\\
&=v^{(\delta_{t}+1)}(\bar{\mathbf{P}}_0) \label{general_form}
\end{align}
where \eqref{general_form} is obtained by substituting \eqref{general_estimater} and \eqref{sys}  into \eqref{covariance1} and with:
\begin{equation}\label{eq:v}
v(\mathbf{X})\triangleq \mathbf{AXA}^{\top}+\mathbf{W}
\end{equation}
$$v^{1}(\cdot) \triangleq v(\cdot), \quad v^{m+1}(\cdot)  \triangleq v (v^{m}(\cdot)),\quad m\geq 1. $$ 

Thus, the quality of the remote estimation error in time slot $t$ can be quantified via $\text{Tr}\left(\mathbf{P}_t \right)$. 
We introduce the following function
\begin{equation}\label{eq:c}
c(i)\triangleq \text{Tr}\left(v^i(\bar{\mathbf{P}}_0)\right), \forall i\in \mathbb{N}.
\end{equation}
From \eqref{general_form}, we can write
\begin{equation} \label{trace}
\text{Tr}\left(\mathbf{P}_t \right) \triangleq c(\delta_t+1).
\end{equation}
Since  $\mathbf{P}_t $ is a  countable stochastic process taking value from a countable infinity set $$\{v^1(\bar{\mathbf{P}}_0),v^2(\bar{\mathbf{P}}_0),\dots\}$$
it will grow during periods of consecutive packet dropouts when $\rho(\mathbf{A})\geq 1$. Since periods of consecutive packet dropouts have unbounded support, at best one can hope for some type of stochastic stability. In the present work, our focus is on the mean-square stability.

\begin{definition}[Mean-Square Stability]
	\normalfont
	The remote estimation system is mean-square stable if and only if
	the average estimation MSE $J$ is bounded, where
\begin{equation}\label{longterm}
J\triangleq\limsup_{T\to\infty}\frac{1}{T}\sum_{t=1}^{T} \mathsf{E}\left[\text{Tr}\left(\mathbf{P}_t\right)\right],
\end{equation}
and $\limsup_{T\rightarrow \infty}$ is the limit superior operator.	
		
\end{definition}
Note that establishing necessary and sufficient stability conditions is non-trivial as we consider correlated fading-channel model in the remote estimation system which induces a countable (and unbounded) Markov chain in the analysis. Some of the existing works adopt stochastic Lyapunov functions to elucidate such situations (see e.g.,~\cite{Quevedo}). These however,  merely lead to sufficient conditions.

\section{Main Results} \label{sec:main_results}
In this section, we present and discuss the main results of the paper, which will be proved in Section~\ref{sec:main}.
\subsection{The Necessary and Sufficient Stability Condition}
\begin{theorem}\label{theorem:main}
	\normalfont
	Let Assumption~\ref{assum} hold. 		
	The remote estimation system described by~\eqref{sys}, \eqref{sub:1} and \eqref{general_estimater} is mean-square stable over the Markov channel defined by~\eqref{M_matrx} and \eqref{D_matrx} if and only if the following condition holds:
	\begin{equation} \label{stability_condition}
	\rho^2(\mathbf{A}) \rho\left(\mathbf{D} \mathbf{M}\right) <1.
	\end{equation}
\end{theorem}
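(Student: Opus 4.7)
The plan is to prove both directions of the biconditional via a regenerative (estimation-cycle) decomposition of the time axis, combined with matched asymptotic upper and lower bounds on the error-growth function $c(i)$ and on the tail of the cycle length. Define an estimation cycle as the interval between two successive successful receptions; let $T_k$ denote the $k$th success time and $\tau_k\triangleq T_{k+1}-T_k$ the $k$th cycle length. Then $\delta_t$ sweeps $0,\dots,\tau_k-1$ inside a cycle and $\sum_{t=T_k+1}^{T_{k+1}} c(\delta_t+1) = \sum_{i=1}^{\tau_k} c(i)$. The channel state at the start of successive cycles forms an ergodic Markov chain on $\mathcal{M}$, so whenever the mean cycle length is finite the Markov renewal theorem delivers
\begin{equation*}
J = \frac{\mathsf{E}_{\mu^\star}\!\left[\sum_{i=1}^{\tau} c(i)\right]}{\mathsf{E}_{\mu^\star}[\tau]} = \frac{\sum_{j\ge 1} c(j)\,\mathbf{P}_{\mu^\star}[\tau\ge j]}{\mathsf{E}_{\mu^\star}[\tau]},
\end{equation*}
where $\mu^\star$ is the stationary distribution of the modulating chain. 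Since the denominator is a finite positive constant, $J<\infty$ reduces to convergence of the numerator series.

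I would then quantify the two factors separately. Unrolling $v^i(\bar{\mathbf{P}}_0) = \mathbf{A}^i\bar{\mathbf{P}}_0(\mathbf{A}^\top)^i + \sum_{k=0}^{i-1}\mathbf{A}^k\mathbf{W}(\mathbf{A}^\top)^k$ and taking traces shows that $c(i)$ is controlled by the element-wise magnitudes of $\mathbf{A}^i$. Gelfand's formula yields an upper bound of order $(\rho(\mathbf{A})+\epsilon)^{2i}$ for any $\epsilon>0$; for the matching lower bound the new element-wise matrix-power estimates advertised in the contribution list should yield $c(i)\gtrsim \rho(\mathbf{A})^{2i}$, using that $\bar{\mathbf{P}}_0\succ 0$ under Assumption~\ref{assum} so the quadratic form $\mathbf{A}^i\bar{\mathbf{P}}_0(\mathbf{A}^\top)^i$ cannot decouple from the eigenspace of $\mathbf{A}$ associated with its spectral radius. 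Symmetrically, enumerating failure sequences on the Markov channel gives $\mathbf{P}_{\mu^\star}[\tau\ge j] = \mathbf{1}^\top\mathbf{D}(\mathbf{M}\mathbf{D})^{j-2}\mu^\star$ for $j\ge 2$; since $\mathbf{M}\mathbf{D}$ is entry-wise non-negative and shares its non-zero spectrum with $\mathbf{D}\mathbf{M}$, Perron--Frobenius together with the same element-wise bounds delivers $c_1\rho(\mathbf{D}\mathbf{M})^{j}\le \mathbf{P}_{\mu^\star}[\tau\ge j]\le c_2(\rho(\mathbf{D}\mathbf{M})+\epsilon)^{j}$ for all large~$j$.

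Combining these estimates, the numerator series is sandwiched between geometric series of common ratio $\rho^2(\mathbf{A})\rho(\mathbf{D}\mathbf{M})$ up to arbitrarily small multiplicative slack. Sufficiency of \eqref{stability_condition} follows by choosing $\epsilon$ so small that $(\rho(\mathbf{A})+\epsilon)^2(\rho(\mathbf{D}\mathbf{M})+\epsilon)<1$ and dominating the series by a convergent geometric. For necessity, when $\rho^2(\mathbf{A})\rho(\mathbf{D}\mathbf{M})\ge 1$ I would split into two subcases: if $\rho(\mathbf{D}\mathbf{M})=1$ the mean cycle length is already infinite, and a direct Fatou argument applied to $\mathsf{E}[c(\delta_t+1)]$ shows $J=\infty$, bypassing the renewal formula; otherwise the matching lower bounds on $c(j)$ and on $\mathbf{P}_{\mu^\star}[\tau\ge j]$ force divergence of the numerator. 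The main obstacle will be the lower-bound side: producing element-wise lower bounds on $\mathbf{A}^i\bar{\mathbf{P}}_0(\mathbf{A}^\top)^i$ and on $(\mathbf{M}\mathbf{D})^j$ that are tight to exactly $\rho(\cdot)^i$ without being diluted by Jordan-block polynomial factors, especially when either matrix is defective or has a reducible substructure. This is precisely where the paper's newly developed element-wise matrix-power bounds are expected to play the decisive role.
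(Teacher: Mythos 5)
Your overall architecture is the same as the paper's: decompose time into estimation cycles delimited by successful receptions, observe that the post-success channel states form an ergodic Markov chain, reduce $J$ to the ratio $\myexpect{C}/\myexpect{T}$, and sandwich the numerator between geometric series of ratio $\rho^2(\mathbf{A})\rho(\mathbf{DM})$ using matched exponential bounds on $c(i)$ and on powers of $\mathbf{DM}$. Two of your ingredients are sound and one is arguably cleaner than the paper's: the upper bounds via Gelfand-type estimates, and the pointwise lower bound $c(i)\geq\lambda_{\min}(\bar{\mathbf{P}}_0)\,\mathrm{Tr}(\mathbf{A}^i(\mathbf{A}^i)^\top)\geq\lambda_{\min}(\bar{\mathbf{P}}_0)\rho(\mathbf{A})^{2i}$, which exploits $\bar{\mathbf{P}}_0\succ 0$ and the Frobenius norm to sidestep the Jordan-block/phase-cancellation problem entirely (the paper instead uses the process-noise sum together with controllability of $(\mathbf{A},\sqrt{\mathbf{W}})$ and a \emph{periodic} element-wise lower bound, precisely because individual entries of $\mathbf{A}^i$ can vanish along subsequences when several eigenvalues share the maximal modulus).

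The genuine gap is in the necessity direction, in the step ``Perron--Frobenius together with the same element-wise bounds delivers $c_1\rho(\mathbf{DM})^{j}\leq \myprob{\tau\geq j}$ under $\mu^\star$.'' The ergodic chain of post-success states lives on a subset $\mathcal{B}'\subseteq\mathcal{B}$ that can be strict (e.g.\ when some $d_i=0$ or some states are unreachable immediately after a success), so $\mu^\star$ puts zero mass on $\mathcal{B}\setminus\mathcal{B}'$. The element-wise lower bound on $[(\mathbf{DM})^{j}]_{m,k}$ of order $\rho^{j}(\mathbf{DM})$ is only guaranteed for \emph{some} row $m$ — and when $\mathbf{DM}$ is reducible (which happens exactly when some $d_i=0$) that row may lie outside the support of $\mu^\star$, in which case your stationary average sees none of it and the claimed lower bound on the tail does not follow. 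The paper has to handle this separately (its ``scenario (ii)''): it shows that every non-post-success state is reachable from a post-success state in at most $M$ failed transmissions, and uses this to transfer a positive fraction of the stationary mass onto the rows where the element-wise lower bound is available, at the cost of a bounded multiplicative factor $\rho^{2l'}(\mathbf{A})\beta'$. Without an argument of this kind your necessity proof only covers the case where all channel states are post-success states. (Your subcase $\rho(\mathbf{DM})=1$ is vacuous under the paper's standing reduction to $\mathbf{D}\neq\mathbf{I}$, since then $\rho(\mathbf{DM})<1$ always holds; and your pointwise tail bound is recoverable from the periodic one only because $j\mapsto\myprob{\tau\geq j}$ is monotone — worth stating explicitly, since the underlying entry-wise bounds are not pointwise.)
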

Theorem~\ref{theorem:main} shows that the stability condition depends on the system matrix $\mathbf{A}$, the packet drop probability matrix $\mathbf{D}$ and the matrix of the channel state transitions $\mathbf{M}$. It is important to note that the necessary and sufficient condition is determined by both the spectral radiuses of $\mathbf{A}$ and the product of two matrices $\mathbf{D}$ and $\mathbf{M}$. Since $\rho(\mathbf{A})$ measures how fast the dynamic process varies, $\rho\left(\mathbf{D} \mathbf{M}\right)$ can be treated as an effective measurement of the Markov channel quality.

\begin{remark}
	In~\cite[Corollary 1]{Quevedo}, a sufficient condition in terms of exponential stability of a conventional-sensor-based remote estimation system over Markov channel is obtained as
	\begin{equation}\label{eq:Daniel}
	\tilde{\rho}^2(\mathbf{A})\max_{i\in\mathcal{M}}\left\lbrace \sum_{j=1}^{M} p_{ij} d_j\right\rbrace <1,
	\end{equation}
	where $\tilde{\rho}(\mathbf{A})$ is the largest singular value of $\mathbf{A}$.
	Using Perron–Frobenius theorem~\cite{Perron}, we have $\max_{i\in\mathcal{M}}\left\lbrace \sum_{j=1}^{M} p_{ij} d_j\right\rbrace> \rho(\mathbf{MD}) = \rho(\mathbf{DM})$.
	In addition, due to the fact that the largest singular value is no smaller than the spectral radius, i.e., $\tilde{\rho}(\mathbf{A})\geq\rho(\mathbf{A})$, it can be proved that the sufficient condition~\eqref{eq:Daniel} is more restrictive than~\eqref{stability_condition}.	
\end{remark}

\begin{corollary}[Special Case I] \label{cor:1}
	\normalfont
	Consider the same assumption and system model in Theorem~\ref{theorem:main}. For the special case of i.i.d. packet dropout channel with packet dropout probability $d$, the remote estimation system is mean-square stable if and only if the following condition holds:
	\begin{equation} 
	\rho^2(\mathbf{A}) d <1.
	\end{equation}
\end{corollary}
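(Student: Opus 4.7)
The plan is to specialize Theorem~\ref{theorem:main} directly by computing $\rho(\mathbf{D}\mathbf{M})$ in the i.i.d.\ setting. An i.i.d.\ packet-dropout channel with drop probability $d$ is captured within the Markov framework in two equivalent ways: (i) as a degenerate single-state chain ($M=1$, $\mathbf{M}=[1]$, $\mathbf{D}=[d]$), or (ii) as any $M$-state chain with $d_i=d$ for every $i\in\mathcal{M}$, in which case the effective dropout sequence observed by the remote estimator is i.i.d.\ Bernoulli$(d)$ regardless of the underlying channel state dynamics. I would work with representation (ii) since it yields an argument that does not appeal to the trivial case.

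First I would observe that the matrix $\mathbf{M}$ defined in~\eqref{M_matrx} has $[\mathbf{M}]_{j,i}=p_{i,j}$, so each column sums to $\sum_{j=1}^{M}p_{i,j}=1$. Hence $\mathbf{1}^{\top}\mathbf{M}=\mathbf{1}^{\top}$, and because $\mathbf{M}$ is an ergodic nonnegative stochastic matrix, the Perron--Frobenius theorem gives $\rho(\mathbf{M})=1$. Under the i.i.d.\ assumption, $\mathbf{D}=d\,\mathbf{I}$ and therefore
\begin{equation*}
\rho(\mathbf{D}\mathbf{M}) \;=\; \rho(d\,\mathbf{M}) \;=\; d\,\rho(\mathbf{M}) \;=\; d.
\end{equation*}

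Substituting this into the condition~\eqref{stability_condition} of Theorem~\ref{theorem:main} yields mean-square stability if and only if $\rho^2(\mathbf{A})\,d<1$, which is precisely the claim of Corollary~\ref{cor:1}. Since the assumptions of Theorem~\ref{theorem:main} (ergodicity of $\mathbf{M}$, Assumption~\ref{assum}) are automatically inherited, no additional work beyond the specialization is required.

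There is essentially no obstacle here: the corollary is a one-line specialization, and the only point deserving care is the verification that $\mathbf{M}$ is column-stochastic (rather than row-stochastic, as might be assumed by convention) so that $\rho(\mathbf{M})=1$ is applied correctly. Once that is checked, Theorem~\ref{theorem:main} does all the heavy lifting, and the classical sufficient-and-necessary threshold $\rho^2(\mathbf{A})\,d<1$ from~\cite{estimation_first,schenato2008optimal} is recovered as a sanity check.
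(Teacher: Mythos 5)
Your proposal is correct and matches the paper's (implicit) argument: the corollary is stated as a direct specialization of Theorem~\ref{theorem:main}, and your computation $\rho(\mathbf{D}\mathbf{M})=\rho(d\,\mathbf{M})=d$ via column-stochasticity of $\mathbf{M}$ is exactly the intended step. The care you take over the column- versus row-stochastic convention of \eqref{M_matrx} is warranted and correctly resolved.
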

\begin{remark}
	For the special case of i.i.d. packet dropouts, our stability condition obtained from Theorem~\ref{theorem:main} is identical to the conventional result in~\cite{schenato2008optimal}.
\end{remark}

\begin{corollary}[Special Case II]\label{cor:2}
	\normalfont
	Consider the same assumption and system model in Theorem~\ref{theorem:main}. For the special case of Markovian packet dropout channel with packet dropout probability $d_1=0$ and $d_2=1$ and the channel state transition matrix $$\mathbf{M}=\begin{bmatrix}
	p_{11}&p_{12}\\
	p_{21}&p_{22}	
	\end{bmatrix},$$ the remote estimation system is mean-square stable if and only if the following condition holds:
	\begin{equation} \label{eq:specialcase2}
	\rho^2(\mathbf{A}) p_{22} <1.
	\end{equation}
\end{corollary}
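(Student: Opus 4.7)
The plan is to derive Corollary~\ref{cor:2} as a direct specialization of Theorem~\ref{theorem:main} by computing $\rho(\mathbf{D}\mathbf{M})$ explicitly for the two-state setup under consideration. Since the necessary and sufficient stability condition established in the theorem is $\rho^2(\mathbf{A})\rho(\mathbf{D}\mathbf{M})<1$, and $\rho^2(\mathbf{A})$ is unchanged here, the only task is to show that $\rho(\mathbf{D}\mathbf{M})=p_{22}$ under the given assumptions on $d_1,d_2$ and $\mathbf{M}$.

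First I would substitute $d_1=0$ and $d_2=1$ into the definition~\eqref{D_matrx} of $\mathbf{D}$, obtaining $\mathbf{D}=\text{diag}\{0,1\}$. Then multiplying on the right by the given $2\times 2$ matrix $\mathbf{M}$ yields
\begin{equation*}
\mathbf{D}\mathbf{M}=\begin{bmatrix} 0 & 0 \\ 0 & 1 \end{bmatrix}\begin{bmatrix} p_{11} & p_{12}\\ p_{21} & p_{22}\end{bmatrix}=\begin{bmatrix} 0 & 0 \\ p_{21} & p_{22}\end{bmatrix}.
\end{equation*}
Because this product is lower triangular, its eigenvalues are precisely the diagonal entries $0$ and $p_{22}$, so $\rho(\mathbf{D}\mathbf{M})=\max\{0,|p_{22}|\}=p_{22}$ (the transition probability is nonnegative).

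Plugging this into the condition of Theorem~\ref{theorem:main} immediately gives $\rho^2(\mathbf{A})p_{22}<1$ as the necessary and sufficient condition for mean-square stability, which is exactly~\eqref{eq:specialcase2}. The only mild subtlety worth mentioning is the convention used in~\eqref{M_matrx}, where $[\mathbf{M}]_{j,i}=p_{i,j}$; however, this does not alter the argument since, regardless of convention, $\rho(\mathbf{D}\mathbf{M})$ and $\rho((\mathbf{D}\mathbf{M})^{\top})$ coincide, and the triangular structure that emerges after zeroing out the first row (or column) of $\mathbf{D}\mathbf{M}$ still exposes $p_{22}$ as the nonzero eigenvalue. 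There is no genuine obstacle: the result is an immediate two-line consequence of the main theorem, which is why it is presented as a corollary rather than a theorem in its own right.
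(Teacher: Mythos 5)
Your proposal is correct and is precisely the intended derivation: the paper states Corollary~\ref{cor:2} without an explicit proof, treating it as the immediate specialization of Theorem~\ref{theorem:main} obtained by computing $\rho(\mathbf{D}\mathbf{M})=p_{22}$ for $\mathbf{D}=\text{diag}\{0,1\}$, exactly as you do. Your remark on the indexing convention in \eqref{M_matrx} is also well taken, and as you note the triangular structure yields $p_{22}$ as the nonzero eigenvalue under either convention.
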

\begin{remark}
	For the Markovian on-off channel in Corollary~\ref{cor:2}, it is interesting to see that the stability condition only depends on one element of the $2$-by-$2$ matrix $\mathbf{M}$, which is the state transition probability from the bad state to the bad state.
	
	We would like to compare our result with the one obtained in~\cite{XieXie}, which considered a conventional sensor scenario.
	In~\cite[Theorem 2]{XieXie}, a necessary stability condition is obtained as
	\begin{equation}
	\rho^2(\mathbf{A}) \min\{p_{22},(1-p_{12})\}<1,
	\end{equation}
	which is less restrictive than our current result~\eqref{eq:specialcase2}.
\end{remark}

\subsection{Upper and Lower Bounds of the Estimation Error Function}
A pair of asymptotic upper and lower bounds of the estimation error function are given below.
\begin{proposition}[Asymptotic upper bound of the estimation-error function]\label{lem:cost_up}
	\normalfont
	For any $\epsilon>0$, there exists $N>0$ and $\kappa >0$ such that $$c(i) < \kappa \left(\rho^2(\mathbf{A})+\epsilon \right)^i, \forall i>N.$$
\end{proposition}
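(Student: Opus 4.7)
The plan is to exploit the explicit closed-form expansion of the iterated operator $v$. Unrolling the recursion $v(\mathbf{X}) = \mathbf{A}\mathbf{X}\mathbf{A}^\top + \mathbf{W}$ yields
\begin{equation*}
v^i(\bar{\mathbf{P}}_0) = \mathbf{A}^i \bar{\mathbf{P}}_0 (\mathbf{A}^\top)^i + \sum_{k=0}^{i-1} \mathbf{A}^k \mathbf{W} (\mathbf{A}^\top)^k,
\end{equation*}
so that, using the cyclic property of the trace and positive semidefiniteness,
\begin{equation*}
c(i) \leq \|\mathbf{A}^i\|^2\,\text{Tr}(\bar{\mathbf{P}}_0) + \text{Tr}(\mathbf{W})\sum_{k=0}^{i-1}\|\mathbf{A}^k\|^2,
\end{equation*}
where $\|\cdot\|$ denotes the spectral norm. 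This reduces the task to controlling $\|\mathbf{A}^k\|$ asymptotically.

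Next, I would invoke the standard consequence of Gelfand's spectral radius formula $\lim_{k \to \infty}\|\mathbf{A}^k\|^{1/k}=\rho(\mathbf{A})$: for any $\epsilon_0 > 0$, there exists $C=C(\epsilon_0)>0$ such that
\begin{equation*}
\|\mathbf{A}^k\| \leq C\,(\rho(\mathbf{A})+\epsilon_0)^k,\quad \forall k \in \mathbb{N}_0.
\end{equation*}
Substituting and writing $\beta \triangleq (\rho(\mathbf{A})+\epsilon_0)^2$, I get
\begin{equation*}
c(i) \leq C^2\,\text{Tr}(\bar{\mathbf{P}}_0)\,\beta^i + C^2\,\text{Tr}(\mathbf{W})\sum_{k=0}^{i-1}\beta^k.
\end{equation*}
Because we are in the regime $\rho^2(\mathbf{A}) \geq 1$, we have $\beta>1$, so the geometric sum is bounded by $\beta^i/(\beta-1)$, and therefore $c(i) \leq \kappa_0\,\beta^i$ for a constant $\kappa_0$ depending on $\bar{\mathbf{P}}_0$, $\mathbf{W}$, and $\epsilon_0$.

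To close the argument, given any target $\epsilon>0$, I would choose $\epsilon_0>0$ small enough that $(\rho(\mathbf{A})+\epsilon_0)^2 \leq \rho^2(\mathbf{A})+\epsilon$, which is possible by continuity of the square. Then $c(i) \leq \kappa_0(\rho^2(\mathbf{A})+\epsilon)^i$; any $N$ works, as the strict inequality of the proposition can be obtained by inflating $\kappa_0$ infinitesimally (or by noting the strict inequality $\beta < \rho^2(\mathbf{A})+\epsilon$ can be arranged by a slightly smaller $\epsilon_0$).

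No step looks like a serious obstacle here: the only mildly delicate piece is the dependence of $C(\epsilon_0)$ in Gelfand's bound, which must absorb the possibly non-monotone initial behavior of $\|\mathbf{A}^k\|^{1/k}$ before it settles near $\rho(\mathbf{A})$; this is handled by taking $C$ to dominate the finitely many early terms. The argument is deliberately coarse (relying only on norm bounds rather than on the element-wise matrix-power bounds developed elsewhere in the paper), which is appropriate for an asymptotic upper bound of this form.
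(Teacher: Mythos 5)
Your proof is correct. The skeleton is the same as the paper's: you unroll the Lyapunov recursion to get $c(i)=\mathrm{Tr}\bigl(\mathbf{A}^i\bar{\mathbf{P}}_0(\mathbf{A}^{\top})^i\bigr)+\sum_{k=0}^{i-1}\mathrm{Tr}\bigl(\mathbf{A}^k\mathbf{W}(\mathbf{A}^{\top})^k\bigr)$, which is exactly the paper's starting identity, and then you control the geometric sum using $\rho^2(\mathbf{A})\ge 1$. Where you diverge is in how the matrix powers are bounded: the paper invokes its Lemma~\ref{lem:matrix_up}, an element-wise bound $\vert[\mathbf{A}^i]_{j,k}\vert^2<\kappa(\rho(\mathbf{A})+\epsilon)^{2i}$ obtained from the Jordan normal form, and pays an $n^2\cdot\max_{j,k}$ factor to pass from entries to traces; you instead use the cyclic trace inequality $\mathrm{Tr}(XY)\le\Vert Y\Vert\,\mathrm{Tr}(X)$ for PSD factors together with the standard Gelfand-formula consequence $\Vert\mathbf{A}^k\Vert\le C(\rho(\mathbf{A})+\epsilon_0)^k$. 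Your route is more elementary and self-contained for this particular upper bound; the paper's heavier element-wise machinery is developed because it is reused for the lower bounds (Proposition~\ref{lem:cost_low}, Lemmas~\ref{lem:matrix_low_1}--\ref{lem:matrix_low_2}) where operator-norm arguments would not suffice. You also handle more carefully than the paper the cosmetic conversion from $(\rho(\mathbf{A})+\epsilon_0)^{2i}$ to $(\rho^2(\mathbf{A})+\epsilon)^i$ and the strict inequality in the statement. No gaps.
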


\begin{proposition}[Asymptotic lower bound of the estimation-error function]\label{lem:cost_low}
	\normalfont
	There exists a constant $N>0$ and $\eta>0$ such that $c(i) \geq \eta (\rho(\mathbf{A}))^{2i}, \forall i>N$.
\end{proposition}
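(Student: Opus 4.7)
The plan is to unroll the affine recursion that defines $v^i(\bar{\mathbf{P}}_0)$ and extract the growth rate from the process-noise-driven summand, bypassing $\bar{\mathbf{P}}_0$ entirely. Iterating \eqref{eq:v} yields the closed form
\begin{equation*}
v^i(\bar{\mathbf{P}}_0) = \mathbf{A}^i \bar{\mathbf{P}}_0 (\mathbf{A}^\top)^i + \sum_{k=0}^{i-1} \mathbf{A}^k \mathbf{W} (\mathbf{A}^\top)^k,
\end{equation*}
which is positive semidefinite. Fix $L \triangleq n+1$, with $n$ the state dimension as in Assumption~\ref{assum}. Dropping the first summand and keeping only the last $L$ terms of the sum, a reindexing $k = (i-L)+j$ gives, for every $i \geq L$,
\begin{equation*}
v^i(\bar{\mathbf{P}}_0) \succeq \mathbf{A}^{i-L}\mathbf{G}_L (\mathbf{A}^\top)^{i-L}, \qquad \mathbf{G}_L \triangleq \sum_{j=0}^{L-1} \mathbf{A}^j \mathbf{W}(\mathbf{A}^\top)^j,
\end{equation*}
where $\succeq$ denotes the positive semidefinite ordering.

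Next I would invoke the controllability part of Assumption~\ref{assum}: the Krylov matrix $\bigl[\sqrt{\mathbf{W}},\mathbf{A}\sqrt{\mathbf{W}},\ldots,\mathbf{A}^{n}\sqrt{\mathbf{W}}\bigr]$ has full row rank, so $\mathbf{G}_L$, which factorises as this Krylov matrix times its transpose, is strictly positive definite, and hence $\lambda_{\min}(\mathbf{G}_L)>0$. Taking the trace and using cyclicity together with the standard PSD inequality $\text{Tr}(\mathbf{Y}\mathbf{X}) \geq \lambda_{\min}(\mathbf{Y})\text{Tr}(\mathbf{X})$ for PSD matrices $\mathbf{X},\mathbf{Y}$, one obtains
\begin{equation*}
c(i) \geq \lambda_{\min}(\mathbf{G}_L)\,\text{Tr}\bigl((\mathbf{A}^\top)^{i-L}\mathbf{A}^{i-L}\bigr) = \lambda_{\min}(\mathbf{G}_L)\,\|\mathbf{A}^{i-L}\|_F^2.
\end{equation*}

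To finish, I would apply the elementary spectral bound $\|\mathbf{A}^k\|_F^2 \geq \|\mathbf{A}^k\|_2^2 \geq \rho(\mathbf{A}^k)^2 = \rho(\mathbf{A})^{2k}$, since the Frobenius norm dominates the spectral norm, which in turn dominates the modulus of every eigenvalue of $\mathbf{A}^k$. Combining,
\begin{equation*}
c(i) \geq \frac{\lambda_{\min}(\mathbf{G}_L)}{\rho(\mathbf{A})^{2L}}\,\bigl(\rho(\mathbf{A})\bigr)^{2i}, \quad i \geq L,
\end{equation*}
which is the claim with $N \triangleq L$ and $\eta \triangleq \lambda_{\min}(\mathbf{G}_L)/\rho(\mathbf{A})^{2L}>0$.

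The only point of care is to resist lower-bounding through $\bar{\mathbf{P}}_0$ (which may be rank deficient, or even vanish in a noiseless-sensor limit): strict positivity has to come from the accumulated process-noise Gramian, and therefore from controllability. Once this observation is made, the rest is a telescoping PSD inequality followed by a textbook spectral-radius bound, so I do not expect any substantive technical obstacle.
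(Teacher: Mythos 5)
Your proof is correct, but it takes a genuinely different and more elementary route than the paper's. The paper derives Proposition~\ref{lem:cost_low} from Lemma~\ref{lem:matrix_low_1}(ii): it isolates a single entry $[\mathbf{A}^m\sqrt{\mathbf{W}}]_{j,k}$, and because a single power $\mathbf{A}^m$ can have anomalously small entries when several eigenvalues share the maximal modulus with different phases, it must introduce the ``asymptotically and periodically lower bounded'' machinery (Jordan blocks, Vandermonde matrices, matrix lower bounds) and then sum $c(i)$ over a window of length $l$ to recover the rate $\rho^{2i}(\mathbf{A})$. You sidestep both difficulties at once: summing $L=n+1$ consecutive noise terms produces the controllability Gramian $\mathbf{G}_L\succ 0$ (this is where Assumption~\ref{assum} enters for you, playing the role that Lemma~5 of Appendix~B plays for the paper), and once a positive definite matrix sits between $\mathbf{A}^{i-L}$ and its transpose, the trace inequality hands you the full Frobenius norm $\|\mathbf{A}^{i-L}\|_F^2$, which dominates $\rho(\mathbf{A})^{2(i-L)}$ for \emph{every} exponent with no periodicity caveat. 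Your argument is shorter, avoids the Jordan-form apparatus entirely, and even yields an explicit constant $\eta$ and threshold $N=L$; the paper's heavier Lemma~\ref{lem:matrix_low_1} is not wasted, however, since it is still needed elsewhere (Lemmas~\ref{lem:matrix_low_2'} and~\ref{lem:matrix_low_2} for the powers of $\mathbf{DM}$, where no Gramian trick is available). The only implicit hypotheses you use --- $\rho(\mathbf{A})>0$ for the final division and $\mathbf{W}=\sqrt{\mathbf{W}}\sqrt{\mathbf{W}}^{\top}$ for the Krylov factorisation --- are both guaranteed by the paper's standing assumptions ($\rho^2(\mathbf{A})\geq 1$ and $\sqrt{\mathbf{W}}$ a symmetric PSD square root), so there is no gap.
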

Propositions~\ref{lem:cost_up} and \ref{lem:cost_low} show that when a large number of consecutive packet dropouts occur, i.e., $i\gg 1$,  the remote estimation error is upper and lower bounded by exponential functions in terms of $i$.

\begin{remark}
It can be observed that the estimation-error function $c(i)$ grows as exponentially fast as $\rho^{2i}(\mathbf{A})$.
\end{remark}

\section{Analysis of the Average Estimation MSE}\label{sec:proof}
In this section, we first investigate an estimation-cycle based performance analysis approach of the remote state estimation, and then develop new element-wise bounds of matrix powers. The results and technical lemmas obtained in this section will be used for the proofs of the main results of the paper. 

As it is clear that Theorem 1 holds for the special cases with $\mathbf{D} = \mathbf{0}$ or $\mathbf{I}$, in the following we only focus on the cases with $\mathbf{D} \neq \mathbf{0}$ nor $\mathbf{I}$.

\subsection{Stochastic Estimation-Cycle Based Analysis}
Before analyzing the long-term average MSE of the remote estimation system and derive the stability condition, we need to introduce and analyze estimation cycle.
To be specific, the $k$th estimation cycle starts after the $k$th successful transmission and ends at the $(k+1)$th successful transmission as illustrated in Fig.~\ref{fig:process}. In other words, the estimation process is divided by the estimation cycles.

\begin{figure}[t]
	\centering
	\includegraphics[scale=1.0]{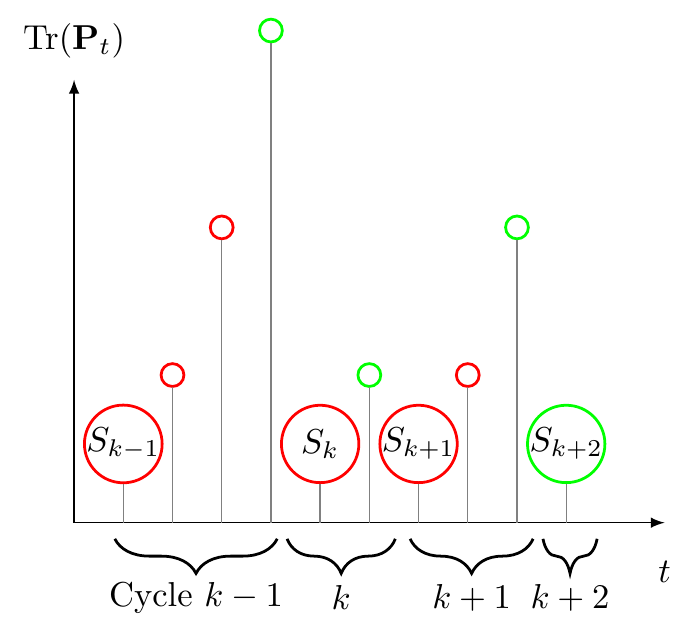}
	\caption{Illustration of estimation cycles, where red and green circles denote failed and successful transmissions, respectively, and big circles denote the beginning of estimation cycles.}
	\label{fig:process}
\end{figure}

The channel state at the beginning of estimation cycle $k$, i.e., a post-success channel state, is denoted by $S_k\in \mathcal{B}' \subset \mathcal{B}$, where
\begin{equation}\label{eq:B'}
\mathcal{B}' \triangleq \{b_j: \max_{i\in\mathcal{M}} (1-d_i) p_{i,j}  >0, \forall j\in \mathcal{M} \}
\end{equation}
is the set of post-success channel states and the cardinality of $\mathcal{B}'$ is $M' \leq M$.
Without loss of generality, we assume that $\mathcal{B}'$ contains the first $M'$ elements of $\mathcal{B}$. In other words, none of the last $(M-M')$ elements of $\mathcal{B}$ can be a post-success channel state, while the others can.
\begin{example}
\normalfont	
	Consider a two-state Markov channel with $$\mathbf{M} =\begin{bmatrix}
	0 & 1\\
	1 & 0\\
	\end{bmatrix},$$ $d_1=1$ and $d_2=0$. It is clear that the channel states deterministically switches between the two states, and the transmission can be successful only in channel state 2. Thus, channel state 1 is the only post-success state, i.e., $\mathcal{B'}=\{b_1\} \subset \mathcal{B}=\{b_1,b_2\}$. 
\end{example}

Then, we have the following property of $S_k$.
\begin{lemma}\label{lem:G}
\normalfont	
$\{S\}_{\mathbb{N}_0}$ is a time-homogeneous ergodic Markov chain with $M'$ irreducible states of $\mathcal{B}'$. 
The state transition matrix of $\{S\}_{\mathbb{N}_0}$ is $\mathbf{G}'$, which is the $M'$-by-$M'$ matrix taken from the top-left corner of $\mathbf{G}$, where
\begin{equation}\label{eq:G}
\mathbf{G}=\sum_{j=0}^{\infty} (\mathbf{DM})^j \mathbf{(I-D)M},
\end{equation}
and the last $(M-M')$ columns of $\mathbf{G}$ are all zeros.
The stationary distribution of $\{S\}_{\mathbb{N}_0}$ is $\boldsymbol{\beta} \triangleq [\beta_1,\cdots,\beta_{M'}]^\top$, which is the unique null-space vector of $(\mathbf{I-G'})^\top$ and $\beta_i >0,\forall i\in \mathcal{M}'$, where $\mathcal{M}'\triangleq \{1,2,\cdots, M'\}$.
\end{lemma}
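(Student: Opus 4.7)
The plan is to exploit the renewal structure of the process: $S_{k+1}$ equals $h_{\tau_k+1}$, where $\tau_k$ denotes the (random) index of the $k$th successful transmission. By the strong Markov property of the channel chain $\{h_t\}$, together with the fact that each transmission outcome $\gamma_t$ is conditionally Bernoulli with parameter $1-d_{h_t}$ given the current channel state, the conditional law of $S_{k+1}$ given $(S_0,\dots,S_k)$ depends only on $S_k$ through a time-invariant kernel. This yields the time-homogeneous Markov property of $\{S_k\}$.

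To identify the kernel $[\mathbf{G}]_{j,\ell}=\myprob{S_{k+1}=b_\ell \mid S_k=b_j}$, I would decompose over the cycle length $T\geq 1$. Inside a cycle, conditional on not-yet-terminated, the channel advances by the substochastic ``fail-and-transit'' operator $\mathbf{DM}$ (whose $(j,k)$-entry $d_j p_{j,k}$ is the joint probability of failing at $b_j$ and transitioning to $b_k$); termination is governed by the ``succeed-and-transit'' operator $(\mathbf{I-D})\mathbf{M}$. A length-$T$ realization of cycle $k$ contributes $[(\mathbf{DM})^{T-1}(\mathbf{I-D})\mathbf{M}]_{j,\ell}$ to the transition probability, and summing the Neumann series over $T\geq 1$ produces $\mathbf{G}=\sum_{t=0}^{\infty}(\mathbf{DM})^t(\mathbf{I-D})\mathbf{M}$. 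Convergence of the series, i.e., $\rho(\mathbf{DM})<1$, follows by interpreting $\mathbf{DM}$ as the kernel of the channel chain killed upon the first successful transmission: since $\mathbf{M}$ is irreducible and $\mathbf{D}\neq\mathbf{I}$ guarantees at least one state $b_{i^\star}$ with $d_{i^\star}<1$, the absorbing success event is reached in finite time from every starting state, so $(\mathbf{DM})^t\mathbf{1}\to\mathbf{0}$. The same probabilistic reading immediately gives $\mathbf{G}\mathbf{1}=\mathbf{1}$, confirming stochasticity.

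The zero-column claim is direct: the $\ell$th column of $(\mathbf{I-D})\mathbf{M}$ has entries $(1-d_j)p_{j,\ell}$, which vanish jointly across $j$ precisely when $b_\ell\notin\mathcal{B}'$ by~\eqref{eq:B'}; left-multiplication by $(\mathbf{DM})^t$ preserves this zero pattern, so the last $M-M'$ columns of $\mathbf{G}$ vanish. In particular, once $\{S_k\}$ enters $\mathcal{B}'$ it remains there with probability one, and the genuine transition kernel of $\{S_k\}$ is the top-left $M'\times M'$ block $\mathbf{G}'$, which is itself row-stochastic.

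It remains to establish ergodicity of $\mathbf{G}'$ on $\mathcal{B}'$, from which Perron--Frobenius delivers the unique stationary vector $\boldsymbol{\beta}>\mathbf{0}$ characterised as the left Perron vector of $\mathbf{G}'$, equivalently the null-space vector of $(\mathbf{I-G'})^\top$. Irreducibility of $\mathbf{G}'$ is inherited from that of $\mathbf{M}$: given $b_i,b_\ell\in\mathcal{B}'$, a positive-probability $\mathbf{M}$-path from $b_i$ to some state $b_{i^\star}$ satisfying $(1-d_{i^\star})p_{i^\star,\ell}>0$, executed with consecutive failures and a final success, realises a positive-probability chain of $\mathbf{G}'$-transitions ending at $b_\ell$. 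The hardest step will be aperiodicity of $\mathbf{G}'$: a single $\mathbf{G}'$-transition aggregates a random number of underlying $\mathbf{M}$-transitions, so the period of $\mathbf{G}'$ cannot be pulled back from that of $\mathbf{M}$ directly; I expect to have to exhibit two $\mathbf{G}'$-cycles of coprime lengths from a fixed state in $\mathcal{B}'$ by varying the number of intra-cycle failures, exploiting aperiodicity of $\mathbf{M}$ to produce $\mathbf{M}$-cycles of coprime lengths and embedding them into admissible $\mathbf{G}'$-cycles through carefully chosen failure/success patterns.
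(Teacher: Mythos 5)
Your proposal follows essentially the same route as the paper's Appendix~A: time homogeneity via the (strong) Markov property at the renewal instants, decomposition of a cycle into a string of ``fail-and-transit'' steps $\mathbf{DM}$ terminated by a ``succeed-and-transit'' step $(\mathbf{I}-\mathbf{D})\mathbf{M}$, the resulting Neumann series for $\mathbf{G}$, the zero-column observation read off from the definition of $\mathcal{B}'$ in \eqref{eq:B'}, and irreducibility of $\mathbf{G}'$ obtained by lifting an $\mathbf{M}$-path into a chain of $\mathbf{G}'$-transitions. You are in fact more careful than the paper on two points: you justify convergence of the series (the paper defers $\rho(\mathbf{DM})<1$ to a separate lemma proved later), and your irreducibility argument correctly accounts for states with $d_i=0$ along the connecting path by letting the forced successes split the excursion into several consecutive $\mathbf{G}'$-steps, each of which lands in $\mathcal{B}'$.

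The one step you leave open --- aperiodicity of $\mathbf{G}'$ --- is, however, not salvageable by the strategy you sketch, because the claim itself fails in general: aperiodicity of $\mathbf{M}$ does not transfer to the post-success chain. Take $M=3$ with $p_{1,2}=1$, $p_{2,3}=1$, $p_{3,1}=p_{3,3}=1/2$ (irreducible, and aperiodic thanks to the self-loop at state $3$), and $d_1=d_2=0$, $d_3=1$. Then $\mathcal{B}'=\{b_2,b_3\}$, and one checks that $S_{k+1}=b_3$ almost surely when $S_k=b_2$ (immediate forced success in state $2$, then a forced move to state $3$), while $S_{k+1}=b_2$ almost surely when $S_k=b_3$ (failures until the channel hits state $1$, where success is forced and the channel moves to state $2$). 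The post-success chain is the deterministic two-cycle $b_2\to b_3\to b_2$, so $\mathbf{G}'$ has period $2$, and no choice of intra-cycle failure patterns can manufacture return times of coprime lengths. The paper's own proof does not establish aperiodicity either --- its ``ergodicity'' argument is purely a reachability argument and only delivers irreducibility --- and nothing used downstream requires it: for a finite irreducible chain positive recurrence is automatic, $(\mathbf{I}-\mathbf{G}')^\top$ has a one-dimensional null space spanned by a strictly positive vector by Perron--Frobenius, and the law of large numbers invoked in \eqref{expect_T} and \eqref{expect_C} holds regardless of periodicity. So read ``ergodic'' in the statement as irreducible and positive recurrent, stop at irreducibility, and conclude directly; hunting for coprime $\mathbf{G}'$-cycles is a dead end.
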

\begin{proof}
See Appendix A.
\end{proof}

\begin{remark}
Our analysis investigates the sequence of successful reception instances. 
This has also been considered in~\cite{anytime}, where the instances of successful reception are return times of a Markov chain. Different to~\cite{anytime}, we focus on the channel states right after these instances, which form an ergodic Markov chain. Our approach will shed lights on the future work of the analysis of closed-loop control systems over Markov channels.
\end{remark}

Let $T_k$ denote the sum number of transmissions in the $k$th estimation cycle.
The sum MSE in the $k$th estimation cycle, say $C_k$, is given as
\begin{equation}\label{g_fun}
C_k = g(T_k) \triangleq \sum_{j=1}^{T_k} c(j).
\end{equation}
From \eqref{longterm} it directly follows that  the average estimation MSE  can be rewritten as
\begin{align}\label{J}
J &=\limsup\limits_{K \rightarrow \infty} \frac{C_1+C_2+\cdots+C_K}{T_1+T_2+\cdots+T_K}.
\end{align}
Since $C_k$ is determined by $T_k$, and the distribution of $T_k$ depends on $S_k$ and the distribution of $S_k$ is time-invariant, the  unconditional distributions of $T_k$ and of $C_k$ are also time-invariant.
We thus drop the time indexes of $T_k$, $C_k$ and $S_k$.
Then, the time average of $\{\cdots, T_k,T_{k+1},\cdots\}$ and $\{\cdots, C_k,C_{k+1},\cdots\}$  can be translated to the following ensemble averages as 
\begin{equation}\label{expect_T}
\myexpect{T}=\lim\limits_{K\rightarrow \infty }\frac{1}{K} \sum_{k=1}^{K} T_k =  \sum_{m=1}^{M} \beta_m \myexpect{T \vert S=b_m},
\end{equation}
and
\begin{equation}\label{expect_C}
\myexpect{C} =\lim\limits_{K\rightarrow \infty }\frac{1}{K} \sum_{k=1}^{K} C_k =  \sum_{m=1}^{M} \beta_m \myexpect{C \vert S=b_m},
\end{equation}
where $\beta_m$ is defined in Lemma~\ref{lem:G} for $m\in\{1,\cdots,M'\}$ and $\beta_m=0$ when $m>M'$.

From the definition of estimation cycle and the property of channel state transition, the conditional probability of the length of an estimation cycle is obtained as
\begin{equation}\label{prob}
\myprob{T=i \vert S=b_m}= \sum_{k=1}^{M} \left[\mathbf{(DM)}^{i-1} \mathbf{(I-D)M}\right]_{m,k}.
\end{equation}
If we now replace \eqref{prob} into \eqref{expect_T} and into \eqref{expect_C}, then   after some algebraic manipulations, one can obtain:
\begin{align}\label{T}
\myexpect{T}&=\sum_{j=1}^{M}\sum_{k=1}^{M} \sum_{i=1}^{\infty} i \left[\mathbf{\Xi}(i)\right]_{j,k},\\ \label{C}
\myexpect{C}&=\sum_{j=1}^{M}\sum_{k=1}^{M} \sum_{i=1}^{\infty} g(i) \left[\mathbf{\Xi}(i)\right]_{j,k},
\end{align}
where $$\mathbf{\Xi}(i) = \text{diag}\{\beta_1,\cdots,\beta_{M}\}\mathbf{(DM)}^{i-1} \mathbf{(I-D)M}.$$
Taking \eqref{T} and \eqref{C} into \eqref{J}, we have
\begin{align}\label{J2}
J =\limsup\limits_{K \rightarrow \infty} \frac{\frac{1}{K} (C_1+C_2+\cdots+C_K)}{\frac{1}{K}(T_1+T_2+\cdots+T_K)}
= \frac{\mathsf{E}\left[C\right]}{\mathsf{E}\left[T\right]}.
\end{align}

Therefore, it turns out that the average estimation MSE $J$ depends on the estimation error function $c(i)$ and the function $(\mathbf{DM})^i\mathbf{(I-D)M}$, both of which involve matrix powers.
In what follows, we will introduce and prove some technical lemmas about the element-wise upper and lower bounds of matrix powers, which are the key steps for analyzing the sufficient and necessary stability conditions of the remote estimation system.

\subsection{Element-Wise Bounds of Matrix Powers}\label{sec:ele}
We give an element-wise upper bound of matrix powers as below.
\begin{lemma}[Element-wise upper bound of matrix power]\label{lem:matrix_up}
	\normalfont
Consider a $z$-by-$z$ matrix $\mathbf{Z}$ with $A$ different eigenvalues $\{\lambda_1,\lambda_2,\cdots,\lambda_A\}$, where $1 \leq A\leq z$, and define $\mathcal{Z}\triangleq \{1,\cdots,z\}$. Then, for any $\epsilon>0$, there exist $N>0$ and $\kappa >0$ such that 
$$\vert [\mathbf{Z}^i]_{j,k} \vert^2 < \kappa \left(\rho(\mathbf{Z})+\epsilon \right)^{2i}, \forall j,k\in \mathcal{Z}, \forall i>N.$$
\end{lemma}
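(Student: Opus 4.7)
The plan is to reduce everything to the Jordan canonical form of $\mathbf{Z}$ and then exploit the elementary fact that any polynomial in $i$ is eventually dominated by a geometric sequence of strictly larger base. I would begin by writing $\mathbf{Z}=\mathbf{P}\mathbf{J}\mathbf{P}^{-1}$, where $\mathbf{J}$ is block-diagonal with Jordan blocks $\mathbf{J}_l=\lambda_l\mathbf{I}_{s_l}+\mathbf{N}_l$ and $\mathbf{N}_l$ is the standard nilpotent matrix with $\mathbf{N}_l^{s_l}=\mathbf{0}$. Since $\lambda_l\mathbf{I}$ commutes with $\mathbf{N}_l$, for $i\geq z$ the binomial theorem gives
\begin{equation*}
\mathbf{J}_l^i\;=\;\sum_{r=0}^{s_l-1}\binom{i}{r}\lambda_l^{\,i-r}\mathbf{N}_l^{r},
\end{equation*}
so that every entry of every block satisfies
\begin{equation*}
\bigl|[\mathbf{J}_l^i]_{p,q}\bigr|\;\leq\;z\binom{i}{z-1}\rho(\mathbf{Z})^{\,i-z+1}
\end{equation*}
because $|\lambda_l|\leq\rho(\mathbf{Z})$ and $s_l\leq z$.

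Next I would push this blockwise bound through the similarity transform. Writing $[\mathbf{Z}^i]_{j,k}=\sum_{p,q}[\mathbf{P}]_{j,p}[\mathbf{J}^i]_{p,q}[\mathbf{P}^{-1}]_{q,k}$ and applying the triangle inequality yields a constant $\alpha=\alpha(\mathbf{P},z)>0$, independent of $i$, $j$ and $k$, such that
\begin{equation*}
\bigl|[\mathbf{Z}^i]_{j,k}\bigr|\;\leq\;\alpha\,(i+1)^{z-1}\,\rho(\mathbf{Z})^{i},\qquad\forall j,k\in\mathcal{Z},\ \forall i\geq z.
\end{equation*}
Squaring reduces the claim to the real inequality $\alpha^{2}(i+1)^{2(z-1)}\rho(\mathbf{Z})^{2i}<\kappa(\rho(\mathbf{Z})+\epsilon)^{2i}$ for all sufficiently large $i$. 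For any fixed $\epsilon>0$, the ratio $(i+1)^{2(z-1)}\bigl(\rho(\mathbf{Z})/(\rho(\mathbf{Z})+\epsilon)\bigr)^{2i}$ tends to zero because the geometric factor, having base strictly less than one, beats any polynomial growth; hence there exist $\kappa>0$ and $N\geq z$ for which the desired strict inequality holds uniformly in $j,k$.

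A short separate argument covers the degenerate case $\rho(\mathbf{Z})=0$: then $\mathbf{Z}$ is nilpotent, $\mathbf{Z}^i=\mathbf{0}$ for $i\geq z$, and the lemma is trivial with $\kappa=1$, $N=z$. The only real obstacle is keeping the constants independent of $j,k$ while absorbing both the polynomial factor coming from non-trivial Jordan blocks and the possible coexistence of several eigenvalues on the spectral circle; both effects are handled uniformly by the $\epsilon$-slack, so no separate treatment of semisimple versus defective eigenvalues is needed.
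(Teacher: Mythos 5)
Your proof is correct and follows essentially the same route as the paper's: Jordan decomposition, an entrywise bound of the form $(\text{polynomial in } i)\cdot\rho(\mathbf{Z})^{i}$, and the observation that the polynomial factor is absorbed by enlarging the base from $\rho(\mathbf{Z})$ to $\rho(\mathbf{Z})+\epsilon$. (One cosmetic point: the intermediate blockwise bound $z\binom{i}{z-1}\rho(\mathbf{Z})^{i-z+1}$ need not hold for all $i\geq z$ when $\rho(\mathbf{Z})>1$ --- the prefactor should be on the order of $\max\{1,\rho(\mathbf{Z})^{z-1}\}$ --- but since it does hold for all sufficiently large $i$ and is absorbed into $\alpha$, the argument is unaffected.)
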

\begin{proof}
See Appendix B.
\end{proof}

\begin{definition}[Asymptotically and periodically lower bounded] \label{def:aplowerbound}
	\normalfont
	A function $r(k)$ is asymptotically and periodically lower bounded by $\underline{r}(k)$ with a period $l\in\mathbb{N}$ if there exists $N\in\mathbb{N}$ such that $$\max\{r(i),r(i+1),\cdots,r(i+l-1)\}\geq \underline{r}(i),\forall i \geq N.$$
\end{definition}
Thus, if $r(k)$ is asymptotically and periodically lower bounded by $\underline{r}(k)$ with a period $l$, then the sum of the function $r(k)$ with a consecutive of $l$ samples is lower bounded by $\underline{r}(k)$.
When a direct lower bound of the function $r(k)$ is intractable or very loose, we can resort to finding a periodical lower bound $\underline{r}(k)$, which might introduce a tight lower bound of the average of $r(k)$ per $l$~samples, i.e., $\underline{r}(k)/l$.
It is clear that the periodical lower bound is tighter if the period $l$ is smaller. In Lemma~\ref{lem:matrix_low_1}, we will show how to determine the period of a specific problem in details.
Definition~\ref{def:aplowerbound} will be used to capture the lower bound of the average sum MSE in \eqref{g_fun} for analyzing the necessary stability condition.

\begin{definition} [Asymptotically lower bounded]
	\normalfont	
A function $r(k)$ is asymptotically lower bounded by $\underline{r}(k)$ if it is asymptotically and periodically lower bounded by $\underline{r}(k)$ with period $1$.
\end{definition}

Given the preceding definitions, we can obtain an element-wise lower bound of matrix powers as below.
\begin{lemma}[Element-wise lower bound of matrix power]\label{lem:matrix_low_1}
	\normalfont
\ 
\\
	(i) Consider a $z$-by-$z$ matrix $\mathbf{Z}$.
	Then there exist $\eta>0$ and $j,k\in\mathcal{Z}$ such that $\left\lvert [\mathbf{Z}^i]_{j,k} \right\rvert^2$ is asymptotically and periodically lower bounded by $\eta (\rho(\mathbf{Z}))^{2i}$. The period is a positive integer no larger than the number of eigenvalues of $\mathbf{Z}$ with the same maximum magnitude.
\\	
	(ii)
Consider a pair of $z$-by-$z$ matrices $\mathbf{Z}$ and $\mathbf{Q}$ with the assumptions that $\mathbf{Q}$ is symmetric positive semidefinite and $(\mathbf{Z},\sqrt{\mathbf{Q}})$ is controllable.
Then there exist $\eta>0$ and $j,k\in\mathcal{Z}$ such that $\left\lvert [\mathbf{Z}^i\sqrt{\mathbf{Q}}]_{j,k} \right\rvert^2$ is asymptotically and periodically lower bounded by $\eta (\rho(\mathbf{Z}))^{2i}$. 
The period has the same property as in (i).
\end{lemma}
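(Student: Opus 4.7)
The plan is to reduce both parts to the same Jordan-form analysis and then extract the periodic lower bound from a Vandermonde argument on the modes of maximum modulus. Write $\mathbf{Z}=\mathbf{S}\mathbf{J}\mathbf{S}^{-1}$ in Jordan canonical form; every entry then admits the exact expansion $[\mathbf{Z}^i]_{j,k}=\sum_{s=1}^{A}p_{s,j,k}(i)\,\lambda_s^i$, where the distinct $\lambda_s$ are the eigenvalues of $\mathbf{Z}$ and $p_{s,j,k}$ is a polynomial of degree at most $r_s-1$, with $r_s$ the largest Jordan block size for $\lambda_s$. Let $q$ be the number of distinct eigenvalues of magnitude $\rho(\mathbf{Z})$, let $r^\star$ be the largest $r_s$ among those, and write $\omega_s \triangleq \lambda_s/\rho(\mathbf{Z})$ so that $|\omega_s|=1$. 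The dominant behaviour of $\mathbf{Z}^i$ as $i\to\infty$ is then $i^{r^\star-1}\rho(\mathbf{Z})^i$ times a matrix $\mathbf{L}(i)=\sum_{s}c_s\,\omega_s^i\,\mathbf{M}_s$, where the sum runs over those max-modulus eigenvalues with $r_s=r^\star$, $c_s$ is a nonzero combinatorial constant, and each $\mathbf{M}_s$ is a nonzero rank-one outer product of the first right and last left generalised eigenvectors of the corresponding Jordan block.

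For part (i), I would first exploit that each $\mathbf{M}_s$ is nonzero to find a coordinate pair $(j,k)$ at which the collection $\{[\mathbf{M}_s]_{j,k}\}_s$ is not identically zero. Pinning such a $(j,k)$ and normalising by $\rho(\mathbf{Z})^i i^{r^\star-1}$, the dominant contribution to $[\mathbf{Z}^i]_{j,k}$ becomes an exponential polynomial $h(i)=\sum_{s=1}^{q'}a_s\,\omega_s^i$ with $1\le q'\le q$ distinct modes and not all $a_s$ zero. The window vector $\bigl(h(i),h(i+1),\ldots,h(i+q'-1)\bigr)^\top$ then equals $\mathbf{V}\,\text{diag}(\omega_1^i,\ldots,\omega_{q'}^i)\,(a_1,\ldots,a_{q'})^\top$, where $\mathbf{V}$ is the $q'\times q'$ Vandermonde matrix generated by the distinct $\omega_s$'s. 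Since $\mathbf{V}$ is invertible with an $i$-independent bound on $\|\mathbf{V}^{-1}\|$, we obtain $\max_{0\le t\le q'-1}|h(i+t)|\ge\eta'>0$ uniformly in $i$. Adding back the strictly sub-dominant terms (which vanish after normalisation) and re-multiplying by $\bigl(\rho(\mathbf{Z})^i i^{r^\star-1}\bigr)^{2}\ge\rho(\mathbf{Z})^{2i}$ produces $\max_{0\le t\le q'-1}\bigl\lvert[\mathbf{Z}^{i+t}]_{j,k}\bigr\rvert^{2}\ge \eta\,\rho(\mathbf{Z})^{2i}$ for all $i\ge N$, which is the periodic lower bound with period $q'\le q$.

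Part (ii) follows by the same route, the only extra requirement being that $\mathbf{M}_s\sqrt{\mathbf{Q}}\neq\mathbf{0}$, i.e.\ that the last left generalised eigenvector $\mathbf{t}^*$ of every dominant Jordan block satisfies $\mathbf{t}^*\sqrt{\mathbf{Q}}\neq\mathbf{0}$. This is precisely the Popov--Belevitch--Hautus test for controllability of $(\mathbf{Z},\sqrt{\mathbf{Q}})$, which is assumed in the statement; hence a good $(j,k)$ still exists and the Vandermonde step applies verbatim. I expect the main obstacle to be exactly this Vandermonde step: rigorously showing that an exponential polynomial with $q'$ distinct unit-modulus modes cannot remain uniformly small on every $q'$-long integer window, and then cleanly controlling the sub-dominant terms, which may themselves carry polynomial growth in $i$ of strictly smaller degree or come with exponential bases $|\lambda|<\rho(\mathbf{Z})$, so that their contribution is dominated uniformly in $i$ once the normalisation is undone.
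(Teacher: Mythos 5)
Your proposal is correct and follows essentially the same route as the paper: Jordan decomposition, isolation of the dominant polynomial-times-exponential term at a suitably chosen entry $(j,k)$, and a Vandermonde-matrix argument over a window of $q'$ consecutive powers to rule out uniform cancellation among the unit-modulus modes, which is exactly how the paper obtains the periodic lower bound with period at most the number of max-modulus eigenvalues. The only cosmetic difference is in part (ii), where you invoke the PBH eigenvector test to conclude that the dominant coefficient survives multiplication by $\sqrt{\mathbf{Q}}$, while the paper establishes the equivalent nonvanishing fact via a separate technical lemma using Sylvester's rank inequality and the rank of the controllability matrix.
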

\begin{proof}
See Appendix B.
\end{proof}

\section{Proof of the Main Results}\label{sec:main}
In this section, we prove Propositions~\ref{lem:cost_up} and \ref{lem:cost_low}, and Theorem~\ref{theorem:main}. 
\subsection{Proof of Proposition~\ref{lem:cost_up}}
	Taking \eqref{eq:v} into \eqref{eq:c}, we have
	\begin{equation}\label{cost}
	\begin{aligned}
	c(i)=\!\text{Tr}\left(\!\!\mathbf{A}^i\sqrt{\mathbf{\bar{P}}_0}(\mathbf{A}^i\sqrt{\mathbf{\bar{P}}_0})^\top\!\right)\!\!+\!\!\sum_{m=0}^{i-1}\!\text{Tr}\left(\!\mathbf{A}^m\mathbf{\sqrt{W}}(\mathbf{A}^m\mathbf{\sqrt{W}})^\top\!\right)\!.
	\end{aligned}
	\end{equation}
	From \eqref{cost} and Lemma~\ref{lem:matrix_up}, for any $\epsilon>0$, there exists $\kappa,\kappa',N>0$ such that for all $i>N$ we have
\begin{equation}
	\begin{aligned}
	&c(i)\leq\!\! n^2 \left(\!\max_{j,k \in \mathcal{N}} \left(\!\!\left[\mathbf{A}^i\sqrt{\mathbf{\bar{P}}_0}\right]_{j,k}\right)^2 \!\!\!\!+ \!\!\sum_{m=0}^{i-1} \max_{j,k \in \mathcal{N}} \left(\!\!\left[\mathbf{A}^m\sqrt{\mathbf{W}}\right]_{j,k}\!\right)^{\!2}\!\! \right)\\
	&\leq n^2\left(\kappa (\rho(\mathbf{A})+\epsilon)^{2i} +\sum_{m=N+1}^{i}\kappa' (\rho(\mathbf{A})+\epsilon)^{2m}  \right.\\
	&\left. \qquad\qquad+\sum_{m=0}^{N} \max_{j,k \in \mathcal{N}} \left(\left[\mathbf{A}^m\sqrt{\mathbf{W}}\right]_{j,k}\right)^2\right)\\
	&\leq n^2\left((i-N+1)\max\{\kappa,\kappa'\} (\rho(\mathbf{A})+\epsilon)^{2i} \right.\\
	&\left. \qquad\qquad+ \sum_{m=0}^{N} \max_{j,k \in \mathcal{N}} \left(\left[\mathbf{A}^m\sqrt{\mathbf{W}}\right]_{j,k}\right)^2\right),
	\end{aligned}
\end{equation}
	where $\mathcal{N}\triangleq\{1,\cdots,n\}$. Recall that $\mathbf{A}$ is an $n$-by-$n$ matrix.
	Thus, for any $\epsilon'>\epsilon$, we can find $N'>N$ and $\kappa''>0$ such that $c(i)\leq \kappa''(\rho(\mathbf{A})+\epsilon')^{2i},\forall i>N'$.
	This completes the proof of Proposition~\ref{lem:cost_up}.

\subsection{Proof of Proposition~\ref{lem:cost_low}}
From Lemma~\ref{lem:matrix_low_1}(ii), we note that
there exists $\eta>0$ and $j,k\in\mathcal{N}$ such that $\left\lvert [\mathbf{A}^i\mathbf{W}]_{j,k} \right\rvert^2$ is asymptotically and periodically lower bounded by $\eta (\rho(\mathbf{A}))^{2i}$ with the period $l$, which is no larger than the dimension of the matrix $\mathbf{A}$.
Then, from \eqref{cost}, when $i$ is sufficiently large, we have
\begin{align}
c(i)&\geq \sum_{m=i-l}^{i-1}\text{Tr}\left(\mathbf{A}^m\mathbf{\sqrt{W}}(\mathbf{A}^m\mathbf{\sqrt{W}})^\top\right)\\
&\geq \sum_{m=i-l}^{i-1} \left\lvert [\mathbf{A}^m\mathbf{W}]_{j,k} \right\rvert^2\\
&\geq \eta (\rho(\mathbf{A}))^{2(i-l)} =\eta \left(\rho(\mathbf{A})\right)^{-2l} \left(\rho(\mathbf{A})\right)^{2i}. 
\end{align}
This completes the proof of Proposition~\ref{lem:cost_low}.

\subsection{Proof of Theorem~\ref{theorem:main}}
Besides the upper and lower bounds of the estimation error function, to obtain the necessary and sufficient stability condition, we need some additional properties of $(\mathbf{DM})^i$ and $(\mathbf{DM})^i(\mathbf{(I-D)M})$.
\begin{lemma}[Property of matrix $(\mathbf{DM})^i$]\label{lem:matrix_low_2'}
	\normalfont
	Consider the stochastic matrix $\mathbf{M}$ and the diagonal matrix $\mathbf{D}$ defined in \eqref{M_matrx} and \eqref{D_matrx}, respectively.
	Let $\mathcal{J}_0 \triangleq \{j|d_j=0,j\in \mathcal{M}\}$ and $\mathcal{\bar{J}}_0 \triangleq \mathcal{M}\backslash \mathcal{\bar{J}}_0 = \{j|d_j\neq0,j\in \mathcal{M}\} \neq \emptyset$.
	
	(i) If $\mathcal{J}_0 = \emptyset$, there exists $\eta>0$ such that $ \left[(\mathbf{DM})^i\right]_{j,k} $ is asymptotically and periodically lower bounded by $\eta \rho^i(\mathbf{DM}),\forall j,k \in \mathcal{M}$.
	
	(ii) If $\mathcal{J}_0 \neq \emptyset$, there exists $\eta>0$, $j\in \mathcal{\bar{J}}_0$ and $k\in \mathcal{J}_0$ such that $ \left[(\mathbf{DM})^i\right]_{j,k}$ is asymptotically and periodically lower bounded by $\eta \rho^i(\mathbf{DM})$.
	
	(iii) $\rho(\mathbf{DM})<1$.
	
\end{lemma}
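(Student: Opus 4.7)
For part (iii), my plan is to invoke the strict monotonicity form of the Perron-Frobenius theorem. Since $d_j \in [0,1]$, we have $\mathbf{DM} \leq \mathbf{M}$ entry-wise; because $\mathbf{D} \neq \mathbf{I}$ some $d_{j_0} < 1$, and irreducibility of $\mathbf{M}$ provides a $k$ with $p_{k,j_0} > 0$, so the inequality is strict at the entry $(j_0,k)$. Since $\mathbf{M}$ is irreducible and column-stochastic, $\rho(\mathbf{M})=1$, and strict spectral-radius monotonicity (any nonnegative matrix strictly dominated entry-wise by an irreducible one has strictly smaller spectral radius) delivers $\rho(\mathbf{DM}) < 1$.

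For part (i), when $\mathcal{J}_0 = \emptyset$ every $d_j > 0$, so $[\mathbf{DM}]_{j,k} = d_j p_{k,j}$ has exactly the same zero/nonzero pattern as $\mathbf{M}$. Hence $\mathbf{DM}$ inherits irreducibility and aperiodicity from the ergodic chain $\mathbf{M}$ and is therefore primitive. I would then apply the Perron-Frobenius limit theorem for primitive matrices, which gives $(\mathbf{DM})^i / \rho^i(\mathbf{DM}) \to \mathbf{v}\mathbf{u}^\top$ as $i \to \infty$, with both Perron eigenvectors $\mathbf{v}, \mathbf{u}$ strictly positive. Consequently $[(\mathbf{DM})^i]_{j,k}/\rho^i(\mathbf{DM}) \to v_j u_k > 0$ uniformly in $j,k$; taking any $\eta < \min_{j,k} v_j u_k$ furnishes the claimed asymptotic bound, which is periodic with period one.

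For part (ii), my plan combines the block structure of $\mathbf{DM}$ with the already-proved Lemma~\ref{lem:matrix_low_1}(i). Because $\mathcal{J}_0 \neq \emptyset$, every row of $\mathbf{DM}$ indexed by $\mathcal{J}_0$ is identically zero, so $\rho(\mathbf{DM})$ coincides with the spectral radius of the sub-block indexed by $\bar{\mathcal{J}}_0 \times \bar{\mathcal{J}}_0$. Applying Lemma~\ref{lem:matrix_low_1}(i) directly to $\mathbf{DM}$ produces some pair $(j,k^\ast)$ with $[(\mathbf{DM})^i]_{j,k^\ast}$ asymptotically and periodically lower bounded by $\eta\rho^i(\mathbf{DM})$, and the zero-row constraint forces $j \in \bar{\mathcal{J}}_0$. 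If $k^\ast \in \mathcal{J}_0$ we are done; otherwise I would shift the column index into $\mathcal{J}_0$ via a one-time multiplication: find $k \in \mathcal{J}_0$ and $m \geq 1$ with $[(\mathbf{DM})^m]_{k^\ast,k} > 0$, and then the inequality
\[
[(\mathbf{DM})^{i+m}]_{j,k} \geq [(\mathbf{DM})^i]_{j,k^\ast}\,[(\mathbf{DM})^m]_{k^\ast,k}
\]
transfers the periodic lower bound to the pair $(j,k)$ after the reindexing $i \mapsto i+m$, with the positive constant $[(\mathbf{DM})^m]_{k^\ast,k}/\rho^m(\mathbf{DM})$ absorbed into $\eta$ and the period unchanged.

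The hard part will be exhibiting such $(k,m)$. Global irreducibility of $\mathbf{M}$ furnishes some $\mathbf{M}$-path from a $\mathcal{J}_0$-state to $k^\ast$, but a generic such path may re-enter $\mathcal{J}_0$ several times, zeroing out the product of $d$-factors that contributes to $[(\mathbf{DM})^m]_{k^\ast,k}$. The clean trick I would use is to pick any $\mathbf{M}$-path from $\mathcal{J}_0$ to $k^\ast$ and isolate its \emph{last} visit to $\mathcal{J}_0$: taking $k$ to be that last $\mathcal{J}_0$-state and $m$ the length of the remaining suffix makes every intermediate and final state of the truncated path lie in $\bar{\mathcal{J}}_0$, so every $d$-factor along that path is strictly positive and $[(\mathbf{DM})^m]_{k^\ast,k} > 0$ as required.
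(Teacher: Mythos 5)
Your proposal is correct, but it reaches parts (i) and (iii) by a genuinely different route than the paper, while part (ii) essentially coincides with the paper's argument. For (i), the paper applies its own Jordan-form Lemma~\ref{lem:matrix_low_1}(i) to $\mathbf{DM}$ to get a single favourable entry and then propagates it to other entries through irreducibility (explicitly only over the column index); you instead observe that $\mathbf{DM}$ inherits the zero pattern of the primitive matrix $\mathbf{M}$ and invoke the Perron--Frobenius limit theorem, which buys a strictly stronger conclusion --- a period-one bound valid uniformly for all $(j,k)$ at once --- at the cost of importing aperiodicity of $\mathbf{M}$ (which the paper does assume). For (iii), the paper argues from the uniform boundedness of $[(\mathbf{DM})^i]_{j,k}$ against the growing lower bound of part (ii); as written that cleanly excludes $\rho(\mathbf{DM})>1$ but is terse about the boundary case $\rho(\mathbf{DM})=1$, whereas your appeal to strict spectral-radius monotonicity for matrices dominated by an irreducible one (using $\mathbf{D}\neq\mathbf{I}$, which the paper assumes throughout this section) settles the strict inequality directly and is arguably tighter. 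For (ii), your ``last visit to $\mathcal{J}_0$'' truncation is the same combinatorial trick as the paper's shortest-path argument on the digraph of $\mathbf{D}'\mathbf{M}$, just phrased on the Markov-chain digraph rather than on the digraph of $\mathbf{DM}$, and your column-shift inequality $[(\mathbf{DM})^{i+m}]_{j,k}\geq[(\mathbf{DM})^{i}]_{j,k^{\ast}}[(\mathbf{DM})^{m}]_{k^{\ast},k}$ is exactly how the paper transfers the bound. The only point worth noting is the degenerate case $\rho(\mathbf{DM})=0$ (possible when $\mathcal{J}_0\neq\emptyset$), where your division by $\rho^{m}(\mathbf{DM})$ is undefined; there the lemma is vacuously true, and the paper glosses over this equally, so it is not a gap relative to the paper's own proof.
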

\begin{proof}	
See Appendix C.
\end{proof}

\begin{lemma}[Property of matrix $(\mathbf{DM})^i(\mathbf{I-D})\mathbf{M}$]\label{lem:matrix_low_2}
	\normalfont
	Given the stochastic matrix $\mathbf{M}$ and the diagonal matrix $\mathbf{D}$ defined in \eqref{M_matrx} and \eqref{D_matrx}, respectively, there exist $\eta>0$ and $j,k\in\mathcal{M}$ such that $ \left[(\mathbf{DM})^i\mathbf{(I-D)M}\right]_{j,k} $ is asymptotically and periodically lower bounded by $\eta \rho^i(\mathbf{DM})$.
\end{lemma}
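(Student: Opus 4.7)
The plan is to bootstrap from Lemma~\ref{lem:matrix_low_2'} by expanding the matrix product element-wise. Writing
$$
\left[(\mathbf{DM})^i(\mathbf{I-D})\mathbf{M}\right]_{j,k} \;=\; \sum_{l=1}^{M}\left[(\mathbf{DM})^i\right]_{j,l}\,(1-d_l)\,p_{l,k},
$$
every summand is nonnegative, so picking out a single favourable index $l$ will immediately produce a lower bound. The task reduces to choosing indices $j,l,k$ so that (a) $[(\mathbf{DM})^i]_{j,l}$ inherits the asymptotic/periodic lower bound of Lemma~\ref{lem:matrix_low_2'}, and (b) the scalar $(1-d_l)p_{l,k}$ is strictly positive.

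I would split into two cases along the lines of Lemma~\ref{lem:matrix_low_2'}. In the case $\mathcal{J}_0=\emptyset$ (no $d_l$ equals zero), part (i) of that lemma says \emph{every} entry $[(\mathbf{DM})^i]_{j,l}$ is asymptotically and periodically lower bounded by $\eta\rho^i(\mathbf{DM})$ with a common period. Since $\mathbf{D}\neq\mathbf{I}$ (the hypothesis introduced at the start of Section~\ref{sec:proof}), there is some $l^*$ with $d_{l^*}<1$; since $\mathbf{M}$ is stochastic, its row $l^*$ contains a strictly positive entry $p_{l^*,k^*}>0$. Taking any $j$ and this $l^*,k^*$ then gives
$$
\left[(\mathbf{DM})^i(\mathbf{I-D})\mathbf{M}\right]_{j,k^*}\;\geq\;(1-d_{l^*})\,p_{l^*,k^*}\,\left[(\mathbf{DM})^i\right]_{j,l^*},
$$
and since a constant multiple of a periodic lower bound is again a periodic lower bound (because the max over a window commutes with multiplication by a nonnegative scalar), the claim follows.

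In the case $\mathcal{J}_0\neq\emptyset$, part (ii) of Lemma~\ref{lem:matrix_low_2'} supplies specific indices $j^*\in\bar{\mathcal{J}}_0$ and $l^*\in\mathcal{J}_0$ along which $[(\mathbf{DM})^i]_{j^*,l^*}$ is asymptotically and periodically lower bounded by $\eta\rho^i(\mathbf{DM})$. The crucial observation is that because $l^*\in\mathcal{J}_0$ we have $d_{l^*}=0$, so the factor $(1-d_{l^*})=1$ is automatically strictly positive, removing what would otherwise be the dangerous case. Picking $k^*$ so that $p_{l^*,k^*}>0$ (again possible because $\mathbf{M}$ is stochastic) then gives
$$
\left[(\mathbf{DM})^i(\mathbf{I-D})\mathbf{M}\right]_{j^*,k^*}\;\geq\;p_{l^*,k^*}\,\left[(\mathbf{DM})^i\right]_{j^*,l^*},
$$
and the same closure-under-positive-scaling argument yields an asymptotic/periodic lower bound of the form $\eta'\rho^i(\mathbf{DM})$ with $\eta'=\eta\, p_{l^*,k^*}>0$.

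I expect the only subtlety, rather than an obstacle, to be verifying that the ``asymptotically and periodically lower bounded'' property is preserved under the elementwise inequality above; this is immediate from Definition~\ref{def:aplowerbound}, since $\max_{0\leq m<L} g(i+m)\geq c\cdot \max_{0\leq m<L} f(i+m)$ whenever $g\geq c f\geq 0$. The case split exhausts all possibilities (recall $\mathbf{D}\neq\mathbf{I}$ is assumed, so $\bar{\mathcal{J}}_1:=\{l:d_l<1\}\neq\emptyset$ in both branches), which completes the proof.
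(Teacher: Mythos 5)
Your proposal is correct and follows essentially the same route as the paper: the same case split on $\mathcal{J}_0=\emptyset$ versus $\mathcal{J}_0\neq\emptyset$, the same invocation of Lemma~\ref{lem:matrix_low_2'}(i) and (ii) respectively, and the same bound obtained by dropping all but one nonnegative term in the matrix product (with the key observation in the second case that $d_{l^*}=0$ for $l^*\in\mathcal{J}_0$ makes the factor $(1-d_{l^*})p_{l^*,k^*}$ strictly positive). Your explicit check that the asymptotic/periodic lower bound survives multiplication by a positive constant is a detail the paper leaves implicit.
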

\begin{proof}
See Appendix C.
\end{proof}

By using the properties in Lemmas~\ref{lem:matrix_low_2'} and \ref{lem:matrix_low_2},  Theorem~\ref{theorem:main} can be proved as in Appendix D.

\section{Numerical Results} \label{sec:num}
In this section, we illustrate and compare the stability regions of the remote estimation system obtained using Theorem~\ref{theorem:main} of the current paper and based on Corollary 1 of our previous work~\cite{Quevedo}. We also present simulated results of the average estimation MSE in \eqref{J} based on the average of $10^5$ time steps. 

We consider an example involving the Pendubot, a two-link
planar robot~\cite{pendubot1}. A linearized continuous time
model for balancing the Pendubot in the upright position can be found in~\cite{pendubot2}. With a sampling time of $15$ ms, we can then obtain the following discrete time model~\cite{AlexSecurity}:
\begin{subequations}\label{eq:pendubot}
	\begin{align} 
&\mathbf{A} = \begin{bmatrix}
1.0058 &0.0150 &-0.0016 &0.0000\\
0.7808 &1.0058 &-0.2105 &-0.0016\\
-0.0060 &0.0000 &1.0077 &0.0150\\
-0.7962 &-0.0060 &1.0294 &1.0077
\end{bmatrix},\\
&\mathbf{C} = \begin{bmatrix}
1 & 0 & 0 & 0\\
0 & 0 & 1 & 0\\
\end{bmatrix}, \\
&\mathbf{W}=\mathbf{u} \mathbf{u}^\top, \mathbf{u}=\begin{bmatrix}
0.003 & 1.0000 &-0.005 &-2.150\\
\end{bmatrix}^\top,\\
&\mathbf{V}  = 0.001 \times \mathbf{I}.
\end{align}
\end{subequations}
Thus, $\rho(\mathbf{A}) = 1.15$ and $\tilde{\rho}(\mathbf{A}) = 2$. 
Unless otherwise
stated, we consider a two-state Markov channel model characterised by the transition matrix $\mathbf{M} = \begin{bmatrix}
0.1 & 0.9\\0.5 & 0.5
\end{bmatrix}$ and conditional dropout probabilities $d_1 = 0.8$, and $d_2 = 0.1$.

Fig.~\ref{fig:4plots} shows the stability regions (in the dropout probability plane) for different $\mathbf{A}$ and $\mathbf{M}$. In this figure,  the solid and dashed line bounded regions are obtained from Theorem~\ref{theorem:main} and~\cite[Corollary 1]{Quevedo}, respectively.
Specifically, we set $\mathbf{A} = \mathbf{A}_1 = \begin{bmatrix}
1.129 &0.0150 &-0.0016 &0.0000\\
0.7808 &1.0058 &-0.2105 &-0.0016\\
-0.0060 &0.0000 &1.0077 &0.0150\\
-0.7962 &-0.0060 &1.0294 &1.0077
\end{bmatrix}$ in case (b), 
where 
 $\rho(\mathbf{A}_1) = 1.2$ and $\tilde{\rho}(\mathbf{A}_1) = 2$, and set $\mathbf{M} = \mathbf{M}_1 =\begin{bmatrix}
0.1 & 0.9\\0.9 & 0.1
\end{bmatrix}$ and $ \mathbf{M}_2 =\begin{bmatrix}
0.9 & 0.1\\0.1 & 0.9
\end{bmatrix}$ in cases (c) and (d), respectively.
From cases (a)-(d), it is clear that our current necessary and sufficient stability region is much larger than the sufficient stability region established in~\cite{Quevedo}.
Also, it can be observed that the necessary and sufficient stability region is convex in case (d) and is non-convex in cases (a)-(c).
Note that the convexity of a stability region is important in practice. For example, if one has tested a set of communications parameter vectors that can stabilize the remote estimation system and the stability region is proved to be convex, any parameter vector that belong to the convex hull of the set can stabilize the system as well.
Comparing (b) with (a), it is clear to see a smaller stability region as the system in (b) is more unstable than (a).
Comparing (c) with (d), it is interesting to see that if the Markov channel has a longer memory, i.e., it has a higher chance to stay in a poor channel condition, then the remote estimation system has a smaller stability region.

\begin{figure}[t]
	\centering\includegraphics[scale=0.6]{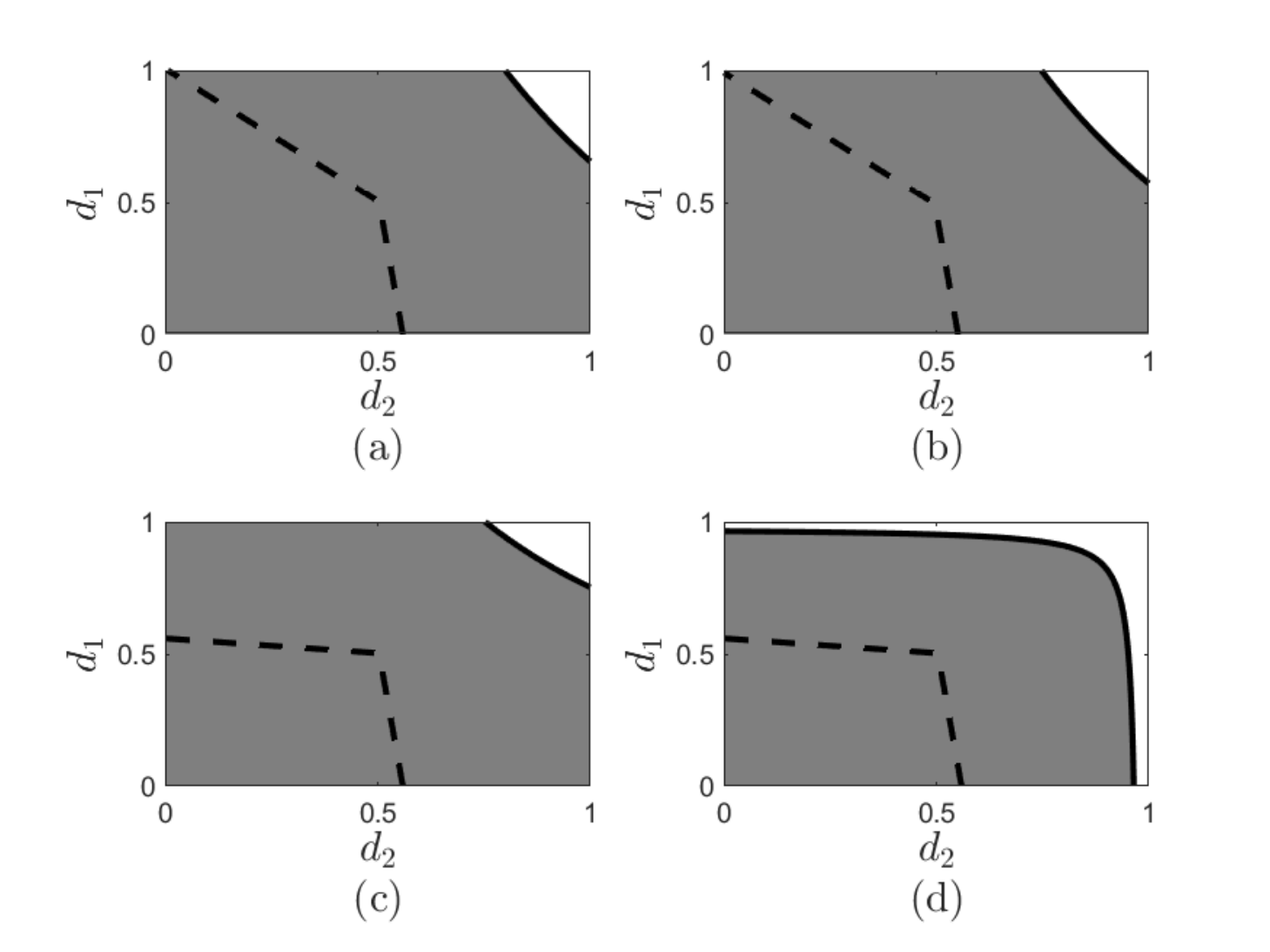}
	\caption{The necessary and sufficient stability region of Theorem~\ref{theorem:main} (i.e., the solid line bounded area) and the sufficient stability region~\cite{Quevedo} (i.e., the dashed line bounded area).}
	\label{fig:4plots}
\end{figure}

Fig.~\ref{fig:4process} shows the original unstable process $\mathbf{x}_t$ of the system~\eqref{eq:pendubot}, and the remote estimation $\mathbf{\hat{x}}_t$. We see that the estimator tracks the unstable process well, and the relative estimation error, i.e., $|\mathbf{x}_t - \mathbf{\hat{x}}_t|/ |\mathbf{x}_t|$ decreases with time.
\begin{figure}[t]
	\centering\includegraphics[scale=0.6]{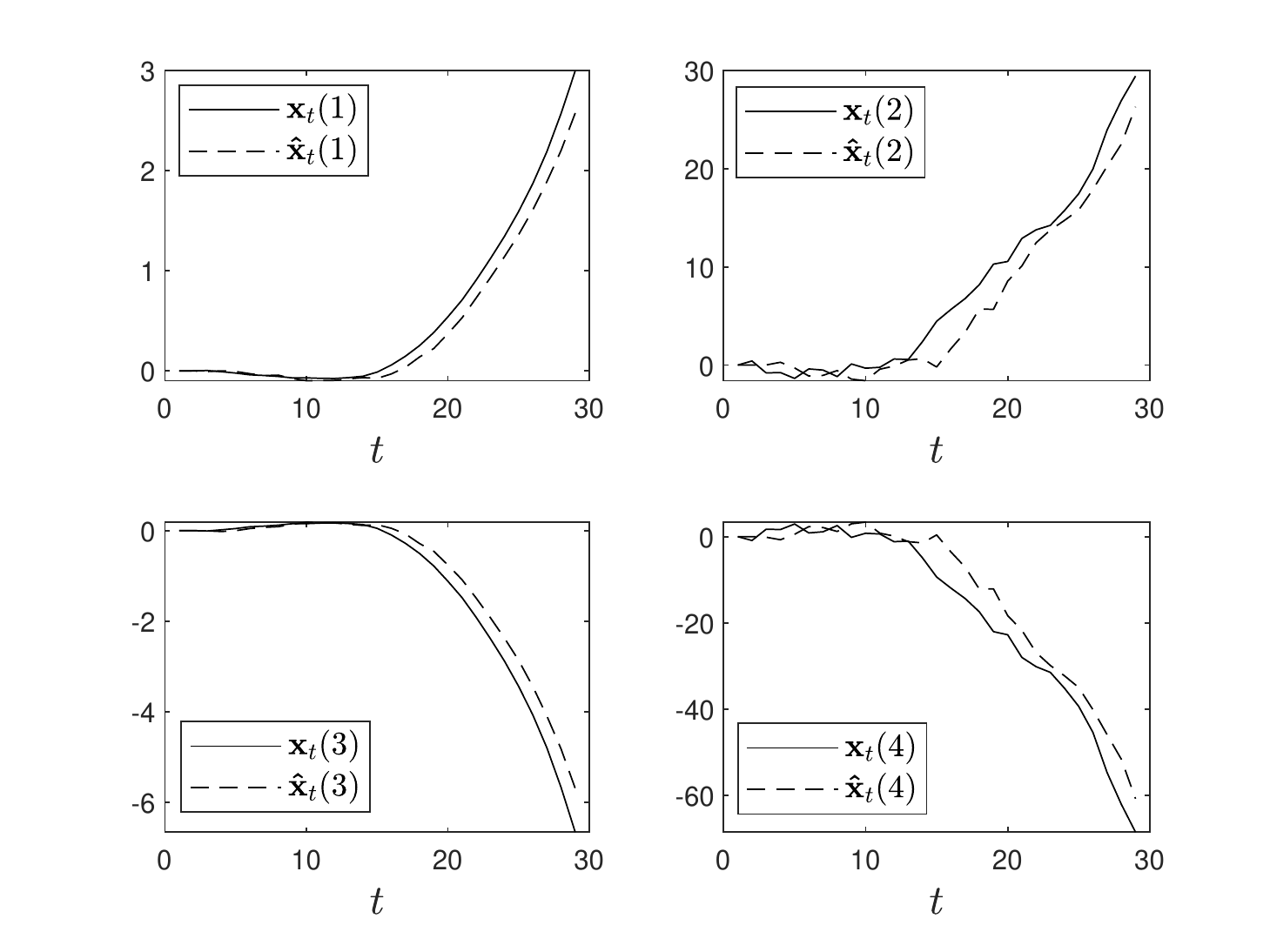}
	\caption{The original process $\mathbf{x}_t \triangleq [\mathbf{x}_t(1),\mathbf{x}_t(2),\mathbf{x}_t(3),\mathbf{x}_t(4)]$ of the system~\eqref{eq:pendubot} and the remote estimation $\mathbf{\hat{x}}_t \triangleq [\mathbf{\hat{x}}_t(1),\mathbf{\hat{x}}_t(2),\mathbf{\hat{x}}_t(3),\mathbf{\hat{x}}_t(4)]$.}
	\label{fig:4process}
\end{figure}

Fig.~\ref{fig:compare} shows the simulated average estimation MSEs of the smart-sensor-based and a conventional-sensor-based remote estimator~\cite{schenato2008optimal} with different packet drop probabilities. Under the stability condition (illustrated as the gray area in Fig.~\ref{fig:4plots}(a)), we see that although the local estimator guarantees a better performance than the remote estimator, the performance gap is non-negligible only when the packet dropout probabilities are large at all channel states. It is interesting to note that the two cases actually have the same stability condition  when packet dropout are i.i.d., see~\cite{schenato2008optimal}. This motivates the hypothesis that the smart-sensor-based and the conventional-sensor-based remote estimation systems have the same stability condition under the Markov channel in terms of the LTI system transition matrix, the packet drop probabilities and the channel state transition matrix.

\begin{figure}[t]
	\centering\includegraphics[scale=0.6]{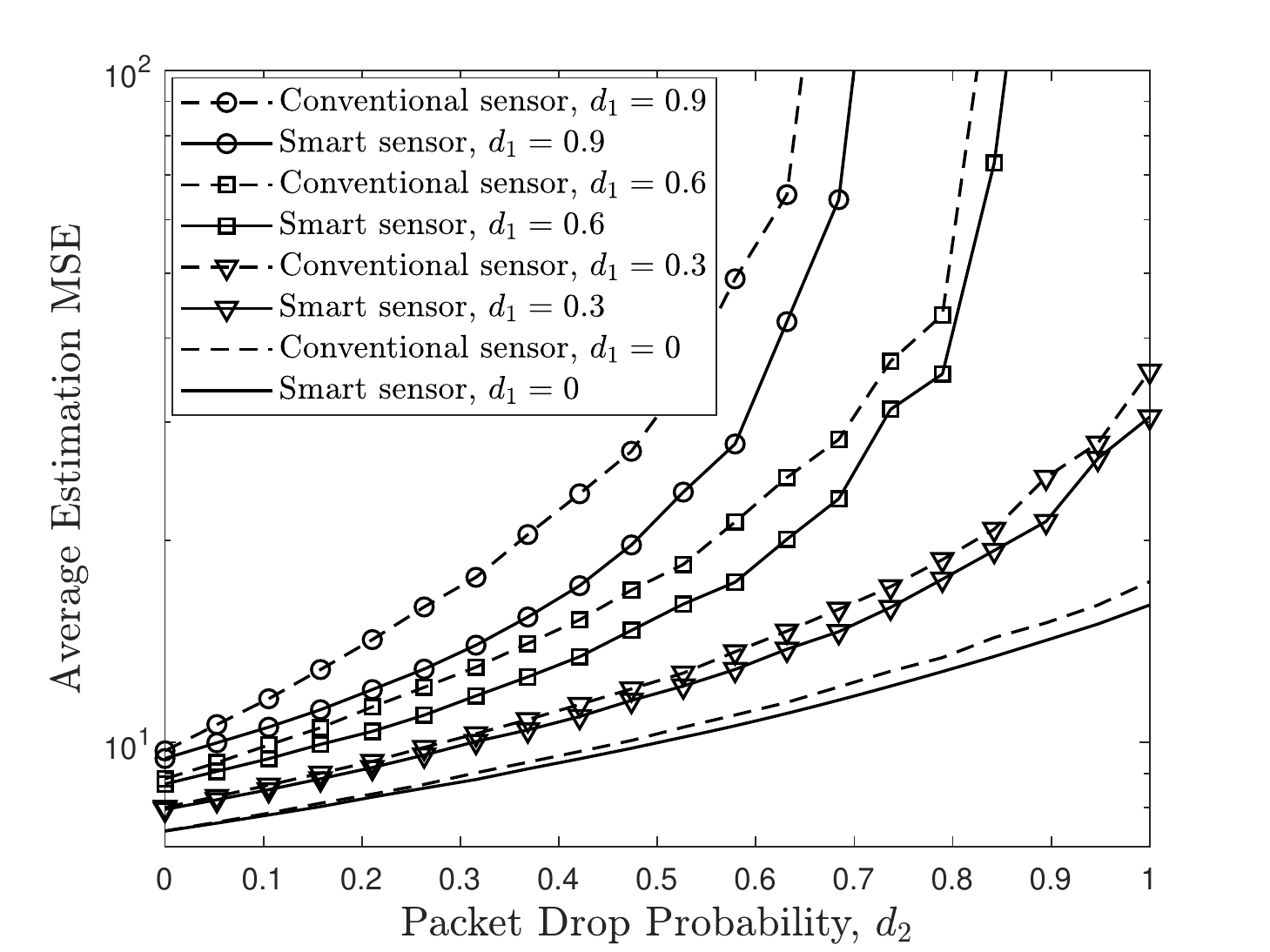}
	\caption{Average estimation MSE versus packet drop probabilities for the smart-sensor-based and conventional-sensor-based cases.}
	\label{fig:compare}
\end{figure}

In Figs.~\ref{fig:simu_contour_local} and~\ref{fig:simu_contour_remote}, we give contour plots of the average estimation MSE in the smart-sensor-based and the conventional-sensor-based cases, respectively. It can be observed that the average estimation MSEs grow up dramatically outside the theoretical stability region (i.e., Fig.~\ref{fig:4plots}(a)) in both cases. This verifies the correctness of the stability condition and also implies that the stability condition of the two cases can be the same.
\begin{figure}[t]
	\centering\includegraphics[scale=0.6]{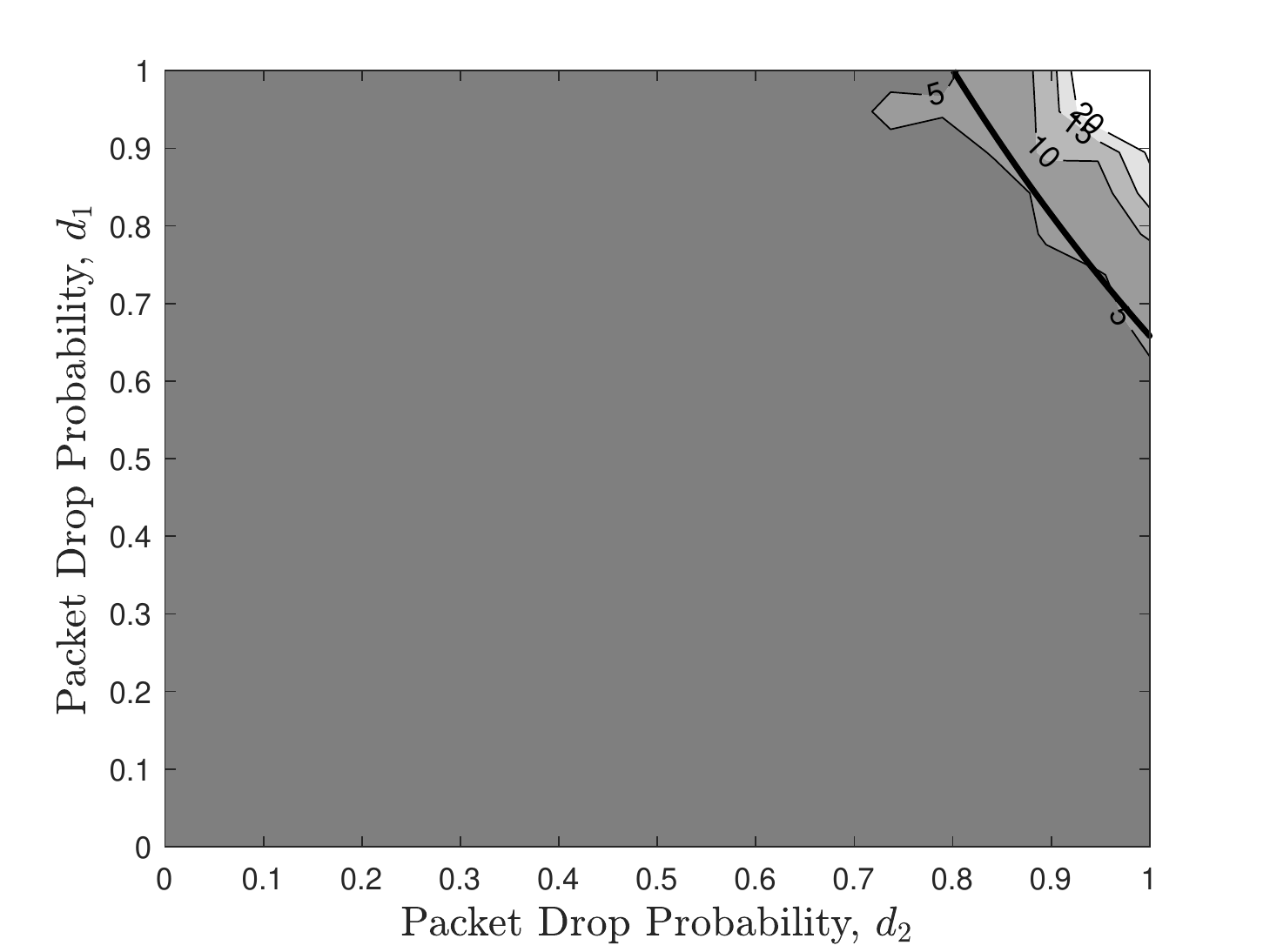}
	\caption{A contour plot (in a base-10 logarithmic scale) of the average estimation MSE of the smart-sensor-based case and the theoretical stability region (i.e., the thick line bounded region), i.e., Fig.~\ref{fig:4plots}(a).}
	\label{fig:simu_contour_local}
\end{figure}

\begin{figure}[t]
	\centering\includegraphics[scale=0.6]{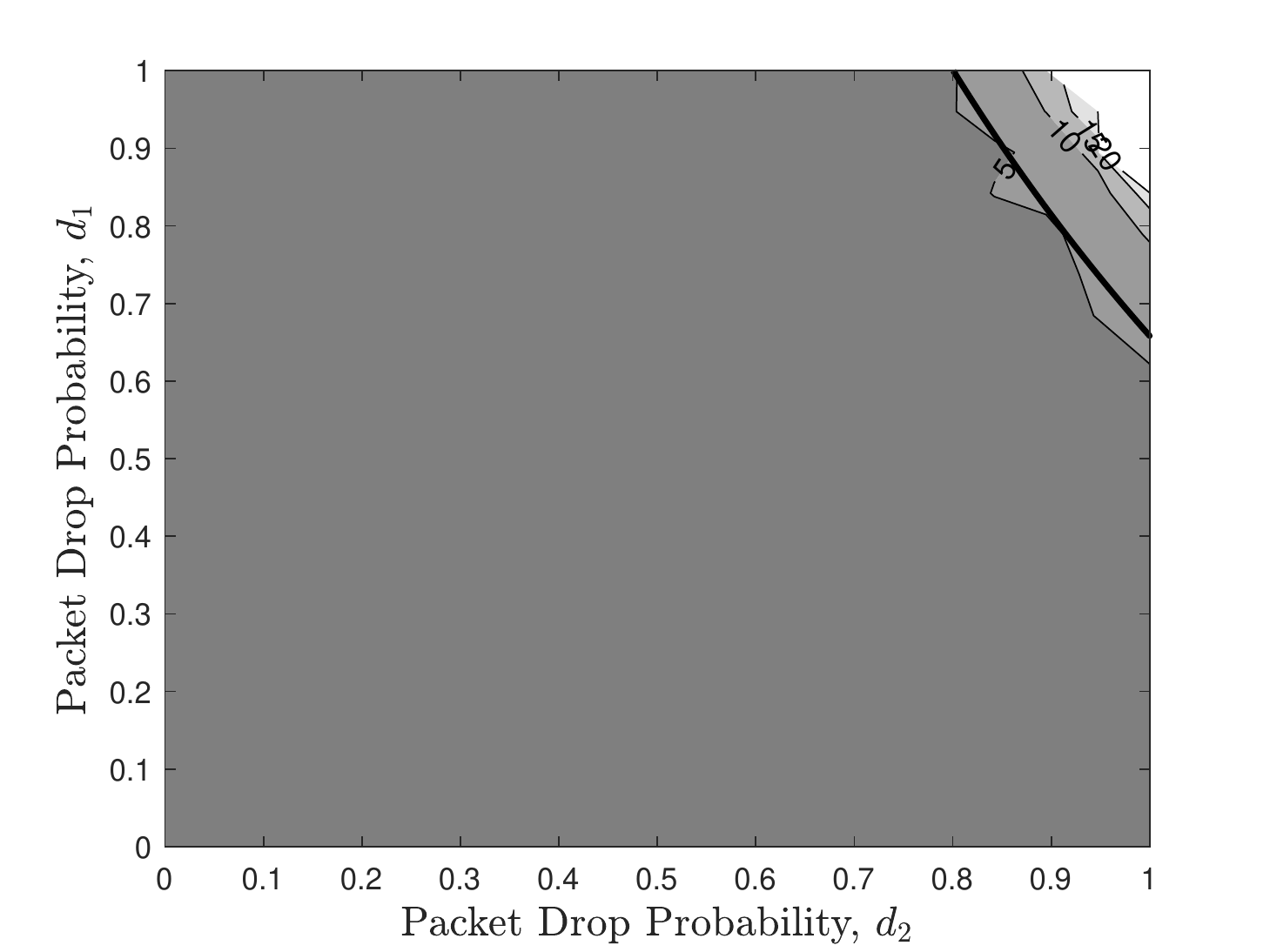}
	\caption{A contour plot (in a base-10 logarithmic scale) of the average estimation MSE of the conventional-sensor-based case and the theoretical stability region (i.e., the thick line bounded region), i.e., Fig.~\ref{fig:4plots}(a).}
	\label{fig:simu_contour_remote}
\end{figure}

In Fig.~\ref{fig:2versus3}, we consider a three-state Markov channel with the state transition matrix $\mathbf{M}_3 = \begin{bmatrix}
0.1 & 0.45 & 0.45\\
0.25 & 0.5 & 0.25\\
0.1& 0.1 & 0.8
\end{bmatrix}$.
Comparing with the two-state channel model considered in Fig.~\ref{fig:compare}, a third channel state with a small packet drop rate $d_3=0.01$ is added.
We see that the three-state channel case performs much better than the two-state channel one in terms of the average estimation MSE as the former channel is less error-prone.

\begin{figure}[t]
	\centering\includegraphics[scale=0.6]{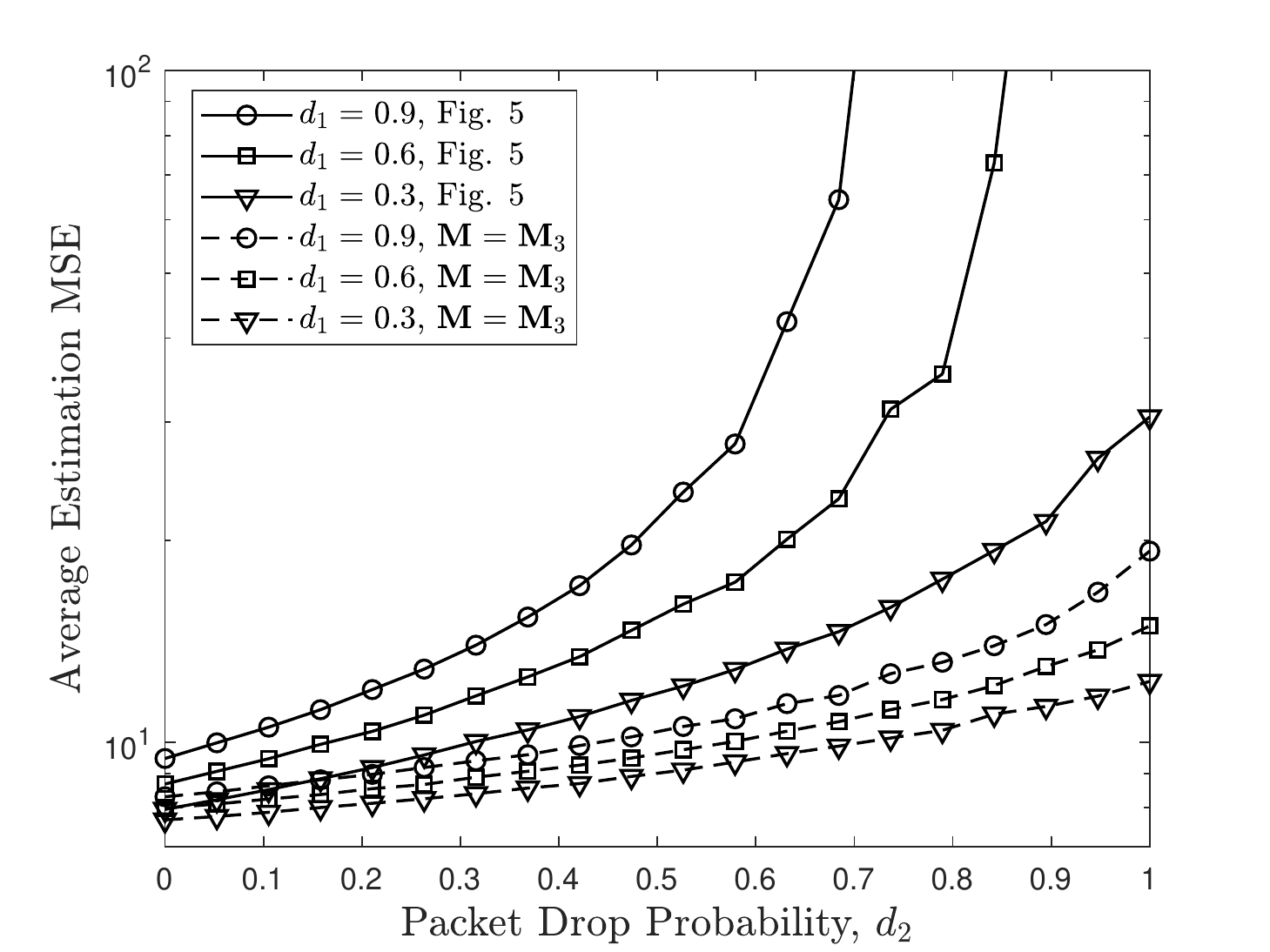}
	\caption{Average estimation MSE versus packet drop probabilities for the two channel state and the three channel state cases.}
	\label{fig:2versus3}
\end{figure}

\section{Conclusions} \label{sec:con}
In this paper, we have established the necessary and sufficient mean-square stability condition of a smart-sensor-based remote estimation system over a Markov fading channel, by developing the asymptotic theory of matrices which provides new (periodic) element-wise bounds of matrix powers. 
Our numerical results have verified the correctness of the stability condition and have shown that it is much more effective than   existing sufficient conditions in the literature. It has been observed that the stability region in terms of the packet drop probabilities in different channel states can either be convex or non-convex depending on the transition probability matrix. 
Our simulation results have suggested that the stability conditions may coincide for schemes with a  
smart sensor and with a conventional sensor. This inspires our future work on analyzing the stability condition of the case without a smart sensor.
Furthermore, the derived stability conditions can be used to design the optimal policy of transmission power control (e.g., via off-line design of $\mathbf{D}$) as well as multi-sensor scheduling policies.
In addition to stability, we will look into the performance of the smart sensor based remote estimation system and investigate the stationary distribution of estimation error covariances.

Furthermore, we will investigate the stability condition of the remote estimation system that adopts an aperiodic transmission policy. For example, an energy-constrained sensor normally uses an energy-saving protocol in practice that is to transmit only when the channel condition is good and/or the receiver has not received a packet for long time, i.e., with a large AoI state.
We will extend the stochastic-estimation-cycle-based analytical approach to this case. Specifically, when analyzing the lengths of the stochastic estimation cycles, we will take into account how the transmission decisions depend on the current AoI state and the current channel conditions.

\section*{Appendix A: Proof of Lemma~\ref{lem:G}}
	The time homogeneousty of $\{S\}_{\mathbb{N}_0}$ is clear due the time homogeneous Markov channel states.
	Let's define the following sets $\mathcal{I}_0\triangleq\{b_i:d_i=0,\forall i\in \mathcal{M}\}$, $\mathcal{I}_{0/1}\triangleq\{b_i:d_i\neq 0,1,\forall i\in \mathcal{M}\}$ and $\mathcal{I}_1\triangleq\{b_i:d_i=1,\forall i\in \mathcal{M}\}$ denoting the channel states in which a packet transmission must succeed, can succeed or fail, and must fail, respectively.
	Since $\mathbf{D} \neq \mathbf{0}$, the `can succeed' state set $\mathcal{I}_0 \cup \mathcal{I}_{0/1} \neq \emptyset$.
	Due to the ergodicity of the Markov channel, it can be proved that given any current channel state, the hitting time of any state of $\mathcal{I}_0 \cup \mathcal{I}_{0/1}$ is finite with a positive probability. Also,
	given a post-success state in $\mathcal{B}'$, it is reachable from a state of $\mathcal{I}_0 \cup \mathcal{I}_{0/1}$ in one step.
	Thus, given any state $S_k \in \mathcal{B}'$, the hitting time of any $S_{k+1} \in \mathcal{B}'$ is finite with a positive probability. 
	This completes the proof of the ergodicity of $\{S\}_{\mathbb{N}_0}$.
	Then, the stationary distribution $\boldsymbol{\beta}$ is the solution of 
	\begin{equation}
	\boldsymbol{\beta}^\top = \boldsymbol{\beta}^\top \mathbf{G}',
	\end{equation}
	where the state transition probability is 
	$\mathbf{G}'_{i,j} \triangleq \myprob{S_{k+1} = b_j \vert S_{k} = b_i},\forall i,j\in\mathcal{M}'$.
	Let $H_{k+1}\in \mathbb{N}$ denote the hitting time from $S_{k}$ to $S_{k+1}$, and $\mathbf{m}^\top_i\in\mathbb{R}^n$ and $\mathbf{n}_i\in\mathbb{R}^n$ denotes the $i$th row and $i$th column of the matrix $\mathbf{M}$. 
	We further have
	\begin{align}
	&\mathbf{G}'_{i,j} 
	=\sum_{l=1}^{\infty} \myprob{S_{k+1} = b_j, H_{k+1}=l \vert S_{k} = b_i}\\
	&= \!(1\!-d_i) p_{i,j} \!+ d_i \mathbf{m}^\top_i \mathbf{(I-D)} \mathbf{n}_i 
	\!+ d_i \mathbf{m}^\top_i \mathbf{D M} \mathbf{(I-D)} \mathbf{n}_i 
	\!+ \cdots
	\end{align}
	which completes the proof of \eqref{eq:G}.
	From the definition of $\mathcal{B}'$ in \eqref{eq:B'}, it is clear that the last $(M-M')$ columns of $\mathbf{(I-D)M}$ are all zeros, completing the proof of Lemma~\ref{lem:G}.

\section*{Appendix B: Proofs of Lemmas~\ref{lem:matrix_up} and~\ref{lem:matrix_low_1}}

\subsection{Preliminaries}
Assume a $z$-by-$z$ matrix $\mathbf{Z}$ has $A$ different eigenvalues $\{\lambda_1,\lambda_2,\cdots,\lambda_A\}$.	Represent $\mathbf{Z}$ in its Jordan normal form of $\mathbf{Z} =\mathbf{U} \mathbf{J} \mathbf{U}^{-1}$, where $\mathbf{U}$ is a invertible matrix and 
\begin{equation}
\mathbf{J} = \begin{bmatrix}
\mathbf{J_1} & &\\
&\ddots &\\
& & \mathbf{J}_A\\
\end{bmatrix},
\end{equation}
\begin{equation}
\mathbf{J}_m = \begin{bmatrix}
\lambda_m & 1 & &\\
&\lambda_m & \ddots & \\
& & \ddots&  1\\
& & & \lambda_m\\
\end{bmatrix},\forall m=1,\cdots,A.
\end{equation}
Let $u_m$ denote the size of Jordan block $    \mathbf{J}_m$.
Then, $\mathbf{U}$ and $\mathbf{U}^{-1}$ can be represented as 
\begin{equation}
\begin{aligned}
\mathbf{U} &= \left[\mathbf{F}_1 \vert \mathbf{F}_2 \vert \cdots \vert \mathbf{F}_A \right]\\
\mathbf{U}^{-1} &= \left[\mathbf{G}_1 \vert \mathbf{G}_2 \vert \cdots \vert \mathbf{G}_A \right]^\top
\end{aligned}
\end{equation}
where $\mathbf{F}_m$ and $\mathbf{G}_m$ are $z$-by $u_m$ matrices.
Since $\mathbf{U}$ is of full rank, $\mathbf{F}_m$ and $\mathbf{G}_m$ have full column rank of $u_m, \forall m=1,\cdots,A$.

Then, we have
\begin{equation}\label{A_power}
\mathbf{Z}^i = \mathbf{U} \mathbf{J}^i \mathbf{U}^{-1} = \sum_{m=1}^{A} \mathbf{F}_m \mathbf{J}^i_m {\mathbf{G}^\top_m},
\end{equation}
where
\begin{equation}\label{jordan}
\mathbf{J}^i_m  = \begin{bmatrix}
& \lambda^i_m & \binom{i}{1}\lambda^{i-1}_m & \cdots & \cdots & \binom{i}{u_m-1}\lambda^{i-u_m+1}_m \\
&  & \ddots & \ddots & \vdots & \vdots\\
&  & & \ddots & \ddots & \vdots\\
&  & &  & \lambda^i_m & \binom{i}{1}\lambda^{i-1}_m\\
&  &  &  &  & \lambda^i_m
\end{bmatrix}.
\end{equation}
Note that ${\mathbf{J}^i_m}$  has a full rank of $u_m$ for all $m\in\mathbb{N}$ if $\lambda_m \neq 0$.

From \eqref{A_power} and \eqref{jordan}, the element at the $j$th row and $k$th column of $\mathbf{F}_i \mathbf{J}^m_i {\mathbf{G}^\top_i}$ denoted by $[\mathbf{F}_m  \mathbf{J}^i_m \mathbf{G}^\top_m]_{j,k}$, can be rewritten as a polynomial in terms of $m$ as
\begin{equation}\label{dominant}
[\mathbf{F}_m  \mathbf{J}^i_m \mathbf{G}^\top_m]_{j,k}= \lambda^i_m [i^{u_m-1}, i^{u_m-2},\cdots,i,1 ]\mathbf{\Lambda}_{m,(j,k)},
\end{equation}
where $\mathbf{\Lambda}_{m,(j,k)}$ is a column vector determined by $\mathbf{F}_m$ and $\mathbf{G}_m$ and is independent with $i$.

\subsection{Proof of Lemma~\ref{lem:matrix_up}}
From \eqref{A_power} and \eqref{dominant}, we have
\begin{equation}
\vert [\mathbf{Z}^i]_{j,k}\vert =\left\lvert \sum_{m=1}^{A} [\mathbf{F}_m  \mathbf{J}^i_m \mathbf{G}^\top_m]_{j,k} \right\rvert \leq \kappa i^z \sum_{m=1}^{A} \vert \lambda_m \vert^i \leq A \kappa  i^z \rho^i(\mathbf{Z}),
\end{equation}
where $\kappa$ is a positive constant.

The result follows upon noting that $\lim\limits_{i\rightarrow \infty} i^z \rho^i(\mathbf{Z})/(\rho(\mathbf{Z})+\epsilon)^i = 0, \forall \epsilon>0$.

\subsection{Proof of Lemma~\ref{lem:matrix_low_1}}
Before proceeding to prove the element-wise lower bound of matrix powers, we need the following technical lemma.
\begin{lemma}\label{lem:tech}
	\normalfont
	Consider a $z$-by-$z$ matrix $\mathbf{Z}$ with $A$ different eigenvalues $\{\lambda_1,\lambda_2,\cdots,\lambda_A\}$, where $A\leq z$, and a $z$-by-$z$ symmetric and positive semidefinite matrix $\mathbf{Q}$.	
	If $(\mathbf{Z},\sqrt{\mathbf{Q}})$ is controllable and $\lambda_m \neq 0$ for some $m \in \{1,\cdots,A\}$, then $\mathbf{F}_m \mathbf{J}^i_m \mathbf{G}^{\top}_m \sqrt{\mathbf{Q}}\neq \mathbf{0}, \forall i \in \mathbb{N}_0$.
\end{lemma}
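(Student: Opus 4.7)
The plan is to reduce the claim $\mathbf{F}_m \mathbf{J}^i_m \mathbf{G}^{\top}_m \sqrt{\mathbf{Q}} \neq \mathbf{0}$ to the much simpler statement $\mathbf{G}^{\top}_m \sqrt{\mathbf{Q}} \neq \mathbf{0}$, and then contradict this using controllability of $(\mathbf{Z},\sqrt{\mathbf{Q}})$.

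First I would observe that, since $\lambda_m \neq 0$, the Jordan block $\mathbf{J}_m^i$ is upper triangular with diagonal entries $\lambda_m^i \neq 0$, hence invertible for every $i \in \mathbb{N}_0$; moreover, $\mathbf{F}_m$ has full column rank $u_m$, being a submatrix of columns of the invertible matrix $\mathbf{U}$. Consequently, $\mathbf{F}_m \mathbf{J}^i_m \mathbf{G}^{\top}_m \sqrt{\mathbf{Q}} = \mathbf{0}$ forces $\mathbf{J}^i_m \mathbf{G}^{\top}_m \sqrt{\mathbf{Q}} = \mathbf{0}$ and hence $\mathbf{G}^{\top}_m \sqrt{\mathbf{Q}} = \mathbf{0}$. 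So the whole lemma reduces to ruling out $\mathbf{G}^{\top}_m \sqrt{\mathbf{Q}} = \mathbf{0}$, independently of $i$.

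Second, I would exploit the block structure of the Jordan decomposition. From $\mathbf{U}^{-1}\mathbf{Z} = \mathbf{J}\,\mathbf{U}^{-1}$ and the block-diagonal form of $\mathbf{J}$, the $u_m$ rows of $\mathbf{G}^{\top}_m$ satisfy $\mathbf{G}^{\top}_m \mathbf{Z} = \mathbf{J}_m \mathbf{G}^{\top}_m$, and iterating, $\mathbf{G}^{\top}_m \mathbf{Z}^k = \mathbf{J}_m^k \mathbf{G}^{\top}_m$ for all $k \geq 0$. Assuming for contradiction that $\mathbf{G}^{\top}_m \sqrt{\mathbf{Q}} = \mathbf{0}$, right-multiplying gives
\begin{equation}
\mathbf{G}^{\top}_m \mathbf{Z}^k \sqrt{\mathbf{Q}} \;=\; \mathbf{J}_m^k \,\mathbf{G}^{\top}_m \sqrt{\mathbf{Q}} \;=\; \mathbf{0}, \quad \forall k \geq 0.
\end{equation}
Hence the $u_m \geq 1$ linearly independent rows of $\mathbf{G}^{\top}_m$ annihilate every column of the controllability matrix $\bigl[\sqrt{\mathbf{Q}} \,\bigl|\, \mathbf{Z}\sqrt{\mathbf{Q}} \,\bigl|\, \cdots \,\bigl|\, \mathbf{Z}^{z-1}\sqrt{\mathbf{Q}}\bigr]$, so this matrix cannot have full row rank $z$, contradicting the controllability of $(\mathbf{Z},\sqrt{\mathbf{Q}})$.

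The main obstacle, as I see it, is not analytical but notational: one must be careful with the Jordan-form conventions used in the paper (columns of $\mathbf{U}$ forming right Jordan chains and rows of $\mathbf{U}^{-1}$ forming the corresponding left chains) so as to justify the key identity $\mathbf{G}^{\top}_m \mathbf{Z} = \mathbf{J}_m \mathbf{G}^{\top}_m$, and to confirm that the rows of $\mathbf{G}^{\top}_m$ are linearly independent (which follows from $\mathbf{U}^{-1}$ being invertible). Everything else — the invertibility of $\mathbf{J}_m^i$, the reduction step via left-multiplication by $\mathbf{F}_m$, and the final contradiction with the Kalman controllability matrix — is straightforward once these ingredients are in place.
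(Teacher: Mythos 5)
Your proof is correct, and its first half is the paper's argument verbatim in substance: both reduce $\mathbf{F}_m\mathbf{J}_m^i\mathbf{G}_m^\top\sqrt{\mathbf{Q}}=\mathbf{0}$ to $\mathbf{G}_m^\top\sqrt{\mathbf{Q}}=\mathbf{0}$ using that $\mathbf{F}_m$ has full column rank and $\mathbf{J}_m^i$ is invertible when $\lambda_m\neq 0$ (the paper phrases this through Sylvester's rank inequality, you through injectivity --- the same fact). Where you diverge is the final contradiction. The paper combines the resolution of identity $\sum_k\mathbf{F}_k\mathbf{G}_k^\top=\mathbf{I}$ with the expansion $\mathbf{Z}^i\sqrt{\mathbf{Q}}=\sum_{k\neq m}\mathbf{F}_k\mathbf{J}_k^i\mathbf{G}_k^\top\sqrt{\mathbf{Q}}$ to show that every column of $\left[\sqrt{\mathbf{Q}},\mathbf{Z}\sqrt{\mathbf{Q}},\cdots,\mathbf{Z}^{z-1}\sqrt{\mathbf{Q}}\right]$ lies in the span of the columns of $\{\mathbf{F}_k\}_{k\neq m}$, a subspace of dimension $z-u_m<z$. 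You instead use the left-intertwining relation $\mathbf{G}_m^\top\mathbf{Z}^k=\mathbf{J}_m^k\mathbf{G}_m^\top$ to exhibit $u_m\geq 1$ independent left annihilators of that same matrix. The two are dual formulations of one fact, since the column span of $\{\mathbf{F}_k\}_{k\neq m}$ is exactly the kernel of $\mathbf{G}_m^\top$ (because $\mathbf{G}_j^\top\mathbf{F}_k=\delta_{jk}\mathbf{I}$); yours is arguably the more economical route, needing only the single block relation $\mathbf{U}^{-1}\mathbf{Z}=\mathbf{J}\mathbf{U}^{-1}$ rather than the identity decomposition and the full spectral expansion of $\mathbf{Z}^i$. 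One point worth making explicit in either version: when $\mathbf{Z}$ is real with complex eigenvalues the rows of $\mathbf{G}_m^\top$ are complex, but a nonzero complex left annihilator still forces the real controllability matrix to have rank below $z$, since rank is unchanged under field extension.
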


\begin{proof}
	Assume that there exists $i \in \mathbb{N}_0$ such that $\mathbf{F}_m \mathbf{J}^i_m \mathbf{G}^{\top}_m \sqrt{\mathbf{Q}} = \mathbf{0}$.
	Then, we have
	\begin{align}
	&0 = \myrank{\mathbf{F}_m \mathbf{J}^i_m \mathbf{G}^{\top}_m \sqrt{\mathbf{Q}}} \\ \label{Sylvester}
	&\geq \myrank{\mathbf{F}_m} + \myrank{\mathbf{J}^i_m \mathbf{G}^{\top}_m \sqrt{\mathbf{Q}}} -u_m\\
	& =\myrank{\mathbf{J}^i_m \mathbf{G}^{\top}_m \sqrt{\mathbf{Q}}}\\
	& =\myrank{\mathbf{G}^{\top}_m \sqrt{\mathbf{Q}}},
	\end{align}
	where \eqref{Sylvester} is due to Sylvester's rank inequality~\cite{strang1993introduction}.
	Thus, $\mathbf{G}^{\top}_m \sqrt{\mathbf{Q}} = \mathbf{0}$.
	
	From the definitions of $\mathbf{F}_m$ and $\mathbf{G}_m$, it is clear that \begin{equation} \label{identity}
	\sum_{k=1}^{A} \mathbf{F}_k \mathbf{G}^\top_k = \mathbf{I}.
	\end{equation}
	By multiplying $\sqrt{\mathbf{Q}}$ on the both sides of \eqref{identity}, we have
	\begin{equation}\label{Q}
	\sqrt{\mathbf{Q}} = \sum_{k\in \{1,\cdots,A\}\backslash m} \mathbf{F}_k \mathbf{G}^\top_k \sqrt{\mathbf{Q}}.
	\end{equation}
	
	Applying $\mathbf{G}^{\top}_m \sqrt{\mathbf{Q}} = \mathbf{0}$ on \eqref{A_power}, it can be obtained that
	\begin{equation}\label{AQ}
	\mathbf{Z}^i \sqrt{\mathbf{Q}} = \sum_{k\in \{1,\cdots,A\}\backslash m} \mathbf{F}_k \mathbf{J}^i_k {\mathbf{G}^\top_k} \sqrt{\mathbf{Q}}, \forall i \in \mathbb{N}_0.
	\end{equation}
	Jointly using \eqref{Q} and \eqref{AQ}, it is easy to see that each column of the matrix concatenation $\left[\sqrt{\mathbf{Q}},\mathbf{Z}\sqrt{\mathbf{Q}},\cdots,\mathbf{Z}^{z-1}\sqrt{\mathbf{Q}}\right]$ is in the span of the columns of $\{\mathbf{F}_k\vert k\in \{1,\cdots,A\}\backslash m \}$. Therefore,
	\begin{equation}
	\begin{aligned}
	&\myrank{\left[\sqrt{\mathbf{Q}},\mathbf{Z}\sqrt{\mathbf{Q}},\cdots,\mathbf{Z}^{z-1}\sqrt{\mathbf{Q}}\right]}\\
	&\leq \sum_{k\in \{1,\cdots,A\}\backslash m} \myrank{\mathbf{F}_k}\\
	&=z-\myrank{\mathbf{F}_m}\\
	&<z,
	\end{aligned}
	\end{equation}
	which, however, contradicts with the assumption that $\left[\sqrt{\mathbf{Q}},\mathbf{Z}\sqrt{\mathbf{Q}},\cdots,\mathbf{Z}^{z-1}\sqrt{\mathbf{Q}}\right]$ is of full rank.
	This completes the proof.
\end{proof}

\begin{proof}[Proof of Lemma~\ref{lem:matrix_low_1}(i)]
	If $\lambda_m\neq 0$, $\mathbf{J}_m$ is a full-rank square matrix and hence $\mathbf{J}^i_m \mathbf{G}^\top_m$ has a full row rank of $u_m$.
	Since $\mathbf{F}_m$ has a full column rank of $u_m$, by using Sylvester's rank inequality, we have
	\begin{equation}
	\begin{aligned}
	\text{Rank}\left(\mathbf{F}_m \mathbf{J}^i_m {\mathbf{G}^\top_m}\right)
	&\geq \text{Rank}\left(\mathbf{F}_m\right)
	+ \text{Rank}\left(\mathbf{J}^i_m {\mathbf{G}^\top_m}\right)-u_m \\
	&= u_m>0.
	\end{aligned}
	\end{equation}
	Therefore, $\mathbf{F}_m \mathbf{J}^i_m {\mathbf{G}^\top_m} \neq \mathbf{0}, \forall i \in \mathbb{N}$.
	From \eqref{dominant}, we can find a pair of $j,k\in\mathcal{Z}$ such that $\mathbf{\Lambda}_{m,(j,k)}\neq \mathbf{0}$.
	Thus, without loss of generality, we assume that the dominant term of the polynomial $[\mathbf{F}_m  \mathbf{J}^i_m \mathbf{G}^\top_m]_{j,k}=\lambda^i_m [i^{u_m-1}, i^{u_m-2},\cdots,i,1 ]\mathbf{\Lambda}_{m,(j,k)}$ is $\Lambda_{m,(j,k)} \lambda^i_m i^{u_{m,(j,k)}}$ when $i\rightarrow \infty$, where $\Lambda_{m,(j,k)}\neq 0$ and $u_{m,(j,k)}\in\{0,\cdots,u_m-1\}$.	
	
	If $\vert \lambda_m \vert = \rho(\mathbf{Z})$ and $\lambda_m $ is the unique eigenvalue that has the maximum magnitude, it is clear that the dominant term of 
	$[\mathbf{Z}^i]_{j,k}=\sum_{m=1}^{\tilde{z}}[\mathbf{F}_m  \mathbf{J}^i_m \mathbf{G}^\top_m]_{j,k}$ is $\Lambda_{m,(j,k)} \lambda^i_m i^{u_{m,(j,k)}}$. Thus, one can find $\eta>0$ such that $\vert [\mathbf{Z}^i]_{j,k} \vert^2$ is asymptotically lower bounded by $\eta \rho^{2i}(\mathbf{Z})$.
	
	If there are multiple eigenvalues having the same maximum magnitude, i.e.,
	$\mathcal{Z}' \triangleq \{i: \vert \lambda_i \vert = \rho(\mathbf{Z}),  \forall i\in \mathcal{Z}\}$ and $\vert \mathcal{Z}'\vert >1$, where $\mathcal{Z}\triangleq\{1,2,\cdots,A\}$, 
	we consider the following two complementary cases:
	
	Case 1): There exists $m\in\mathcal{Z'}$ such that $\Lambda_{m,(j,k)}\neq 0 $ and $u_{m,(j,k)}>u_{m',(j,k)},\forall m' \in \mathcal{Z'}\backslash\{m\}$. In this case, the dominant term of $[\mathbf{Z}^i]_{j,k}=\sum_{m=1}^{\tilde{z}}[\mathbf{F}_m  \mathbf{J}^i_m \mathbf{G}^\top_m]_{j,k}$ is still $\Lambda_{m,(j,k)} \lambda^i_m i^{u_{m,(j,k)}}$.
	Thus, one can find $\eta>0$ such that $\vert [\mathbf{Z}^i]_{j,k} \vert^2$ is asymptotically lower bounded by $\eta \rho^{2i}(\mathbf{Z})$.
	
	Case 2): There exists a set $\mathcal{Z}'' \subseteq \mathcal{Z}'$ with cardinality $z''\geq 2$ such that  $u_{m,(j,k)}=u_{m',(j,k)},\forall m,m'\in\mathcal{Z}''$ and $u_{m,(j,k)}>u_{m',(j,k)},\forall m\in\mathcal{Z}'',m'\in\mathcal{Z}'\backslash \mathcal{Z}''$.
	In this case,  $\vert [\mathbf{Z}^i]_{j,k} \vert^2$ may not be asymptotically lower bounded by $\eta \rho^{2i}(\mathbf{Z})$ due to the multiple eigenvalues with identical magnitude but different phases.
	
	In the following, we will show that $\sum_{m\in \mathcal{Z''}} \Lambda_{m,(j,k)} \lambda^i_m i^{u_{m,(j,k)}}$ is asymptotically and periodically bounded by $\eta \rho^{2i}(\mathbf{Z})$.
	Let $m_l$ denote the index of the $l$th eigenvalue in $\mathcal{Z''}$, where $l\in\{1,\cdots,z''\}$. Thus,  $\lambda_{m_l} \triangleq \rho(\mathbf{Z}) e^{\mathrm{j}\phi_{m_l}}$, where $\phi_{m_l}\in[0,2\pi)$.
	We have the following matrix
	\begin{equation} 
	\begin{aligned}
	\mathbf{\Pi} 
	&\triangleq \begin{bmatrix}
	\lambda^i_{m_1} & \lambda^i_{m_2}&\cdots&\lambda^i_{m_{z''}}\\
	\lambda^{i+1}_{m_1} & \lambda^{i+1}_{m_2}&\cdots&\lambda^{i+1}_{m_{z''}}\\
	\vdots & \vdots&\cdots&\vdots\\
	\lambda^{i+z''-1}_{m_1} & \lambda^{i+z''-1}_{m_2}&\cdots&\lambda^{i+z''-1}_{m_{z''}}
	\end{bmatrix} \\
	&= \text{diag}\left\lbrace\rho^i(\mathbf{Z}) e^{\mathrm{j} i \phi_{m_1}},\rho^{i+1}(\mathbf{Z}) e^{\mathrm{j} {(i+1)} \phi_{m_1}},\cdots,\right.\\
	&\left. \qquad\qquad \rho^{i+z''-1}(\mathbf{Z}) e^{\mathrm{j} (i+z''-1) \phi_{m_1}} \right\rbrace \mathbf{\Pi}',
	\end{aligned}
	\end{equation}
	where 
	\begin{equation}\notag
	\begin{aligned}
	\mathbf{\Pi}' 
	&\triangleq \begin{bmatrix}
	1 & e^{\mathrm{j} i (\phi_{m_2}-\phi_{m_1})}&\cdots&e^{\mathrm{j} i (\phi_{m_{z''}}-\phi_{m_1})}\\
	1 & e^{\mathrm{j} (i+1) (\phi_{m_2}-\phi_{m_1})}&\cdots&e^{\mathrm{j} (i+1) (\phi_{m_{z''}}-\phi_{m_1})}\\
	\vdots & \vdots&\cdots&\vdots\\
	1 & e^{\mathrm{j} (i+z''-1) (\phi_{m_2}-\phi_{m_1})}&\cdots&e^{\mathrm{j} (i+z''-1) (\phi_{m_{z''}}-\phi_{m_1})}\\
	\end{bmatrix}\\
	&=\mathbf{\Pi}'' \mathbf{\Phi} \\
	&\triangleq\begin{bmatrix}
	1 & e^{\mathrm{j}  (\phi_{m_2}-\phi_{m_1})}&\cdots&e^{\mathrm{j}  (\phi_{m_{z''}}-\phi_{m_1})}\\
	1 & e^{\mathrm{j} 2 (\phi_{m_2}-\phi_{m_1})}&\cdots&e^{\mathrm{j} 2 (\phi_{m_{z''}}-\phi_{m_1})}\\
	\vdots & \vdots&\cdots&\vdots\\
	1 & e^{\mathrm{j} z'' (\phi_{m_2}-\phi_{m_1})}&\cdots&e^{\mathrm{j} z'' (\phi_{m_{z''}}-\phi_{m_1})}
	\end{bmatrix} \times\\	
	\end{aligned}
	\end{equation}
	\begin{equation}
	\begin{aligned}	
	& \quad \begin{bmatrix}
	1 & 0 &\cdots&0\\
	0 & e^{\mathrm{j} (i-1) (\phi_{m_2}-\phi_{m_1})}&\cdots& 0\\
	\vdots & \vdots&\cdots&\vdots\\
	0 & 0 &\cdots&e^{\mathrm{j} (i-1) (\phi_{m_{z''}}-\phi_{m_1})}
	\end{bmatrix},
	\end{aligned}
	\end{equation}
	and $\mathbf{\Pi}''$ is a Vandermonde matrix, which is invertible  due to the fact that $\lambda_{m_l}\neq\lambda_{m_l'}$ for $l\neq l'$, see~\cite{Vandermonde}.
	Let $\mathbf{b}\triangleq [\Lambda_{m_1,(j,k)},\Lambda_{m_2,(j,k)},\cdots,\Lambda_{m_{z''},(j,k)}]^\top \neq \mathbf{0}$. Since $\mathbf{\Pi}''$ is invertible, using the inequality of matrix-vector product~\cite{norm1,norm2}, we have 
	\begin{equation}
	\vert \mathbf{\Pi}'\mathbf{b} \vert = \vert \mathbf{\Pi}''\mathbf{\Phi} \mathbf{b} \vert \geq \vert (\mathbf{\Pi}'')^{-1} \vert^{-1} \vert \mathbf{\Phi} \mathbf{b} \vert =\vert (\mathbf{\Pi}'')^{-1} \vert^{-1} \vert \mathbf{b} \vert >0,
	\end{equation}
	where $\vert (\mathbf{\Pi}'')^{-1} \vert^{-1} \neq 0$ is the minimum magnitude of the eigenvalues of $\mathbf{\Pi}''$.
	Since the largest magnitude of the elements of $\mathbf{\Pi}'\mathbf{b}$ is no smaller than $\vert \mathbf{\Pi}'\mathbf{b} \vert/\sqrt{z''}$, we have
	\begin{equation}
	\begin{aligned}
	& \max_{l' = 0,\cdots,z''-1 } \left\lvert \sum_{l=1}^{z''}\Lambda_{m_l,(j,k)} e^{\mathrm{j}(i+l')(\phi_{m_l-m_1})} \right\rvert \\
	&\geq \frac{1}{\sqrt{z''}} \vert (\mathbf{\Pi}'')^{-1} \vert^{-1} \vert \mathbf{b} \vert>0,
	\end{aligned}
	\end{equation}
	and hence
	\begin{equation}
	\begin{aligned}
	&\max_{l' = 0,\cdots,z''-1 } \left\lvert \sum_{l=1}^{z''}\Lambda_{m_{l},(j,k)} \lambda^{(i+l')}_{m_{l}} (i+l')^{u_{m_{l},(j,k)}} \right\rvert \\
	&\geq \frac{1}{\sqrt{z''}} \vert (\mathbf{\Pi}'')^{-1} \vert^{-1} \vert \mathbf{b} \vert \rho^i(\mathbf{Z}) i^{u_{m_1,(j,k)}}.
	\end{aligned}
	\end{equation}
	Therefore, $\vert [\mathbf{Z}^i]_{j,k} \vert^2$ is asymptotically and periodically lower bounded by $\eta \rho^{2i}(\mathbf{Z})$ with period $z''$, where $\eta$ is a positive constant.	
\end{proof}

\begin{proof}[Proof of Lemma~\ref{lem:matrix_low_1}(ii)]
	From \eqref{dominant}, if $\lambda_m \neq 0$, it is easy to see that
	\begin{equation}
	[\mathbf{F}_m  \mathbf{J}^i_m \mathbf{G}^\top_m \sqrt{\mathbf{Q}}]_{j,k}=\lambda^i_m [i^{u_m-1}, i^{u_m-2},\cdots,i,1 ]\mathbf{\Lambda}'_{m,(j,k)},
	\end{equation}
	where $\mathbf{\Lambda}'_{m,(j,k)}$ is a column vector determined by $\mathbf{F}_m$, $\mathbf{G}_m$ and $\sqrt{\mathbf{Q}}$ and is independent with~$i$.
	Thus, there exist $j,k\in\mathcal{Z}$ such that $\mathbf{\Lambda}'_{m,(j,k)}\neq \mathbf{0}$; otherwise, it violates Lemma~\ref{lem:tech}.
	Then, by following the same steps of the proof of Lemma~\ref{lem:matrix_low_1}(i), we can prove that there exists $\eta>0$ such that $\vert [\mathbf{Z}^i \sqrt{\mathbf{Q}}]_{j,k} \vert^2$ is asymptotically and periodically lower bounded by $\eta \rho^{2i}(\mathbf{Z})$.
\end{proof}

\section*{Appendix C: Proofs of Lemmas~\ref{lem:matrix_low_2'} and~\ref{lem:matrix_low_2}}
\subsection{Proof of Lemma~\ref{lem:matrix_low_2'}}
	For part (i), 
since $\mathbf{M}$ is an irreducible non-negative matrix and $d_j>0, \forall j\in \mathcal{M}$, the $M$-by-$M$ matrix $\mathbf{DM}$ is also irreducible and non-negative, i.e., one can associate with the matrix a certain directed graph $\mathcal{G}$, which has exactly $M$ vertexes, and there is an edge from vertex $j$ to vertex $k$ precisely when $[\mathbf{DM}]_{j,k} > 0$, and $\mathcal{G}$ is strongly connected.
By using Lemma~\ref{lem:matrix_low_1}, there exists $j,k\in\mathcal{M}$ such that $[(\mathbf{DM})^i]_{j,k}$ is asymptotically and periodically lower bounded by $\eta \rho^i(\mathbf{DM})$, where the period is no larger than the number of eigenvalues of $\mathbf{DM}$ with the same maximum magnitude.
Then using the non-negative and irreducible property of $\mathbf{DM}$, given $k'\in\mathcal{M}$, we can find $l \in \mathbb{N}^+$ such that $k'$ is reachable from $k$ in $l$ steps, i.e., $[(\mathbf{DM})^l]_{k,k'} = \eta' >0$. Thus, if $[(\mathbf{DM})^i]_{j,k} \geq \eta \rho^i(\mathbf{DM})$,  we have $$[(\mathbf{DM})^{i+l}]_{j,k'} \geq [(\mathbf{DM})^{i}]_{j,k} (\mathbf{DM})^{l}]_{k,k'} \geq \frac{\eta \eta'}{\rho^l(\mathbf{DM})} \rho^{i+l}(\mathbf{DM}).$$
Since  $[(\mathbf{DM})^i]_{j,k}$ is asymptotically and periodically lower bounded by $\eta \rho^i(\mathbf{DM})$, $[(\mathbf{DM})^{i}]_{j,k'}$ is asymptotically and periodically lower bounded by $\frac{\eta \eta'}{\rho^l(\mathbf{DM})} \rho^{i}(\mathbf{DM})$.
This completes the proof of part (i).

For part (ii), we construct a diagonal matrix $\mathbf{D}' \triangleq \text{diag}\{d'_1,\cdots,d'_M\}$, where $d'_j=d_j$ if $d_j>0$ otherwise $d'_j=1$. Thus, $\mathbf{D'M}$ is irreducible and non-negative.
Since $\mathbf{DM}$ has all zero rows, the direct graph of $\mathbf{DM}$, $\mathcal{G}$, can be generated by the strongly connected graph $\mathcal{G}'$ induced by $\mathbf{D'M}$ and then remote the edges from vertexes $k\in\mathcal{J}_0$. In other words, $\mathcal{G}$ is part of $\mathcal{G}'$. Then, it is easy to see that for any $k\in\mathcal{\bar{J}}_0$, we can find $k'\in \mathcal{J}_0$ such that there exists a path from $k$ to $k'$ in $\mathcal{G}$, otherwise, it violates the irreducible property of $\mathcal{G}'$.
Therefore, for any $k\in \mathcal{\bar{J}}_0$, there exist $k'\in \mathcal{J}_0$ and  $l\in \mathbb{N}$ such that $[(\mathbf{DM})^l]_{k,k'}>0$.
By using this property, part (ii) of Lemma~\ref{lem:matrix_low_2'} can be proved by following the similar steps of part (i).

For part (iii), since $[(\mathbf{DM})^i]_{j,k} \leq [\mathbf{M}^i]_{j,k}$, $\forall j,k \in \mathcal{M}$ and each element of $\mathcal{M}$ is strictly less than $1$, we have $[(\mathbf{DM})^i]_{j,k} <1$, $\forall j,k \in \mathcal{M}$. By using part (ii), i.e., there exist $j,k \in \mathcal{M}$ and $\eta>0$ such that $[(\mathbf{DM})^i]_{j,k}$ is asymptotically and periodically lower bounded by $\eta \rho^i(\mathbf{DM})$, we have $\rho(\mathbf{DM})<1$.

\subsection{Proof of Lemma~\ref{lem:matrix_low_2}}
For the case that $\mathcal{J}_0=\emptyset$, since $\mathbf{D}\neq \mathbf{I}$ and $\mathbf{M}$ is a stochastic matrix, there exists $k',k \in \mathcal{M}$ such that $[\mathbf{(I-D)M}]_{k',k}=\eta'>0$.
Using Lemma~\ref{lem:matrix_low_2'}(i), for any $j,k'\in\mathcal{M}$, $[(\mathbf{DM})^i]_{j,k'}$ is asymptotic and periodically lower bounded by $\eta \rho^i(\mathbf{DM})$. Thus, there exists $j,k',k\in\mathcal{M}$ such that  $$[(\mathbf{DM})^i\mathbf{(I-D)M}]_{j,k}\geq [(\mathbf{DM})^i]_{j,k'} [\mathbf{(I-D)M}]_{k',k} \geq \eta'\eta \rho^i(\mathbf{DM}).$$

For the case that $\mathcal{J}_0\neq \emptyset$,  given any $k'\in \mathcal{J}_0$, we can find $k\in \mathcal{M}$ such that $[\mathbf{(I-D)M}]_{k',k}=\eta'>0$.
Using Lemma~\ref{lem:matrix_low_2'}(ii), there exists $j\in\mathcal{\bar{J}}_0,k'\in\mathcal{J}_0$ such that $[(\mathbf{DM})^i]_{j,k'}$ is asymptotic and periodically lower bounded by $\eta \rho^i(\mathbf{DM})$. Thus, there exists $j\in\mathcal{\bar{J}}_0$, $k' \in \mathcal{J}_0$ and $k \in \mathcal{M}$ such that, also in this case,  $[(\mathbf{DM})^i\mathbf{(I-D)M}]_{j,k}\geq [(\mathbf{DM})^i]_{j,k'} [\mathbf{(I-D)M}]_{k',k}\geq \eta'\eta \rho^i(\mathbf{DM})$.

\section*{Appendix D: Proof of Theorem~\ref{theorem:main}}

Using the property that $c(i)$ is a monotonically increasing function~\cite{shi2012scheduling}, from \eqref{g_fun}, we have
\begin{equation}\label{g_inequal}
c(i) \leq g(i) \leq i c(i),\forall i\in\mathbb{N}.
\end{equation}
Then, we consider two scenarios: (i) all channel states are post-success states, i.e., $\mathcal{B}'=\mathcal{B}$ and $M'=M$; and (ii) some channel states are not post-success states, i.e., $\mathcal{B}\backslash \mathcal{B}'\neq \emptyset$ and $M'<M$.

(i) $M=M'$. Using Proposition~\ref{lem:cost_up}, Lemma~\ref{lem:matrix_up} and the inequality~\eqref{g_inequal}, for any $\epsilon>0$, we can find $\kappa>0$ such that $\myexpect{C}$ in \eqref{C} is upper bounded as
\begin{equation} \label{C_up}
\myexpect{C}< \kappa M^2  \bar{\beta} \sum_{i=1}^{\infty} i (\rho(\mathbf{A})+\epsilon)^{2i} (\rho(\mathbf{DM})+\epsilon)^{i}+\gamma,
\end{equation}
where $\bar{\beta} \triangleq \max\{\beta_1,\cdots,\beta_M\}$, and $\gamma$ is a constant.
Thus, $\myexpect{C}$ is bounded if $\rho^2(\mathbf{A})\rho(\mathbf{DM})<1$.
By using Proposition~\ref{lem:cost_low} and Lemma~\ref{lem:matrix_low_2} and after some algebraic manipulations, there exists $\eta>0$ such that $\myexpect{C}$ in \eqref{C}
\begin{equation} \label{C_low}
\myexpect{C} > \eta \underline{\beta} \sum_{i=1}^{\infty} \rho^{2i}(\mathbf{A}) \rho^{i}(\mathbf{DM}) + \gamma',
\end{equation}
where $\underline{\beta} \triangleq \min\{\beta_1,\cdots,\beta_M\}>0$, and $\gamma'$ is a constant. 
Thus, $\rho^2(\mathbf{A})\rho(\mathbf{DM})<1$ if $\myexpect{C}$ is bounded. Therefore, $\myexpect{C}$ is bounded if and only if $\rho^2(\mathbf{A})\rho(\mathbf{DM})<1$.
Similarly, since $\rho(\mathbf{DM})<1$ given in Lemma~\ref{lem:matrix_low_2'}(iii), it can be proved that $\myexpect{T}$ is always bounded.

Therefore, $\myexpect{C}/\myexpect{T}$ is bounded if and only if $\rho^2(\mathbf{A})\rho(\mathbf{DM})<1$ holds. This completes the proof of Theorem~\ref{theorem:main} in scenario (i).

(ii) $M>M'$. 
It is clear that the upper bound in scenario (ii) is the same as in \eqref{C_up}. 
Different from scenario (i), the lower bound cannot be obtained as in \eqref{C_low} directly since $\underline{\beta} = \min\{\beta_1,\cdots,\beta_{M'},\underbrace{0,\cdots,0}_{M-M'}\}=0$ making the lower bound useless in scenario (ii).

Taking Proposition~\ref{lem:cost_low} and \eqref{g_inequal} into \eqref{C}, there exists $\eta>0$ such that
\begin{equation}\label{eq:C_case2}
\begin{aligned}
\myexpect{C}&>\eta\sum_{j=1}^{M}\sum_{k=1}^{M} \sum_{i=1}^{\infty} \rho^{2i}(\mathbf{A}) \left[\text{diag}\{\beta_1,\cdots,\beta_{M'},\underbrace{0,\cdots,0}_{M-M'}\} \right. \\
&\left. \qquad \times \mathbf{(DM)}^{i-1} \mathbf{(I-D)M}\right]_{j,k} + \gamma'',
\end{aligned}
\end{equation}
where $\gamma''$ is a constant.
Due to the ergodicity of the Markov channel, for any non-post-success state $b_{j'}$ in $\mathcal{B}\backslash\mathcal{B}'=\{b_{M'+1},\cdots,b_{M}\}$, there exists a post-success state $b_{i'}$ in $\mathcal{B}'$ such that $b_{i'}$ can transit to $b_{j'}$ after a finite number of $l'$ failure transmissions, where $l'<M$.
In other words, for any $j'\in \{M'+1,\cdots,M\}$, there exists $l'_{j'}<M$ such that the $j'$th column of the matrix $$\text{diag}\{\beta_1,\cdots,\beta_{M'},\underbrace{0,\cdots,0}_{M-M'}\}\mathbf{(DM)}^{l'_{j'}}$$ is not of all zeros and has a positive entry of $\beta'_{j'}$.
Then, we have
\begin{equation}\label{eq:beta_case2}
\begin{aligned}
&\sum_{j=1}^{M}\sum_{k=1}^{M}\sum_{i=1}^{\infty} \rho^{2i}(\mathbf{A}) \left[\text{diag}\{\beta_1,\cdots,\beta_{M'},\underbrace{0,\cdots,0}_{M-M'}\} \right.\\
&\left. \qquad\qquad\qquad\qquad\qquad\qquad  \mathbf{(DM)}^{i-1} \mathbf{(I-D)M}\right]_{j,k}\\
&>\sum_{j=1}^{M}\sum_{k=1}^{M}\sum_{i=l'_{j'}+1}^{\infty} \rho^{2i}(\mathbf{A}) \left[\text{diag}\{\underbrace{0,\cdots,0}_{j'-1},\beta'_{j'},\underbrace{0,\cdots,0}_{M-j'}\} \right. \\
&\left. \qquad\qquad\qquad\qquad\qquad\qquad  \mathbf{(DM)}^{i-l'_{j'}-1} \mathbf{(I-D)M}\right]_{j,k}\\ 
&=\sum_{j=1}^{M}\sum_{k=1}^{M}\sum_{i=1}^{\infty} \rho^{2i}(\mathbf{A}) \left[\text{diag}\{\underbrace{0,\cdots,0}_{j'-1},\rho^{2l'_{j'}}(\mathbf{A})\beta'_{j'},\underbrace{0,\cdots,0}_{M-j'}\} \right. \\
&\left. \qquad\qquad\qquad\qquad\qquad\qquad  \mathbf{(DM)}^{i-1} \mathbf{(I-D)M}\right]_{j,k}.
\end{aligned}
\end{equation}
Applying \eqref{eq:beta_case2} into \eqref{eq:C_case2} for $(M-M')$ times, it can be obtained~as
\begin{equation}
\begin{aligned}
&(M-M'+1)\myexpect{C}\\
&>\eta\sum_{j=1}^{M}\sum_{k=1}^{M} \sum_{i=1}^{\infty} \rho^{2i}(\mathbf{A})\\ 
&\times \left[\text{diag}\{\beta_1,\cdots,\beta_{M'},\underbrace{\rho^{2l'_{M'+1}}(\mathbf{A})\beta'_{M'+1},\cdots,\rho^{2l'_M}(\mathbf{A})\beta'_M}_{M-M'}\}\right. \\
&\left. \qquad\mathbf{(DM)}^{i-1} \mathbf{(I-D)M}\right]_{j,k} \\
&+ (M-M'+1)\gamma''.
\end{aligned}
\end{equation}
Letting $$\underline{\beta} \triangleq \min\{\beta_1,\cdots,\beta_{M'},{\rho^{2l'_{M'+1}}(\mathbf{A})\beta'_{M'+1},\cdots,\rho^{2l'_M}(\mathbf{A})\beta'_M}\}>0$$ and following the same steps in scenario (i), the proof of Theorem~\ref{theorem:main} in scenario (ii) is completed.	
	
    
	\ifCLASSOPTIONcaptionsoff
	\fi

\bibliographystyle{IEEEtran}


\begin{IEEEbiography}
	[{\includegraphics[width=1in,height=1.25in,clip,keepaspectratio]{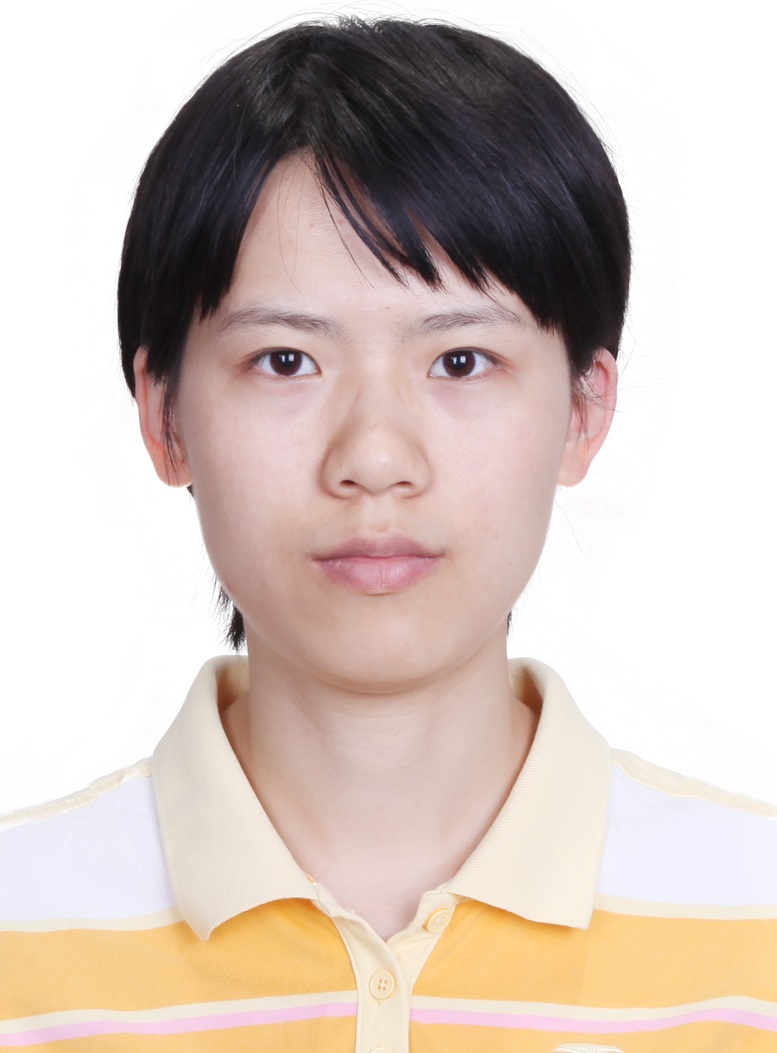}}]{Wanchun Liu}
(M'18) received the B.S. and M.S.E. degrees in electronics and information engineering from Beihang University, Beijing, China, and Ph.D. from The Australian National University, Canberra, Australia. She is currently a Postdoctoral Research Associate at the University of Sydney. Her research interest lies in ultra-reliable low-latency communications and wireless networked control systems for industrial Internet of Things (IIoT). She was a recipient of the Chinese Government Award for Outstanding Students Abroad 2017. She is a co-chair of the Australian Communication Theory Workshop since 2020.
\end{IEEEbiography}
\begin{IEEEbiography}
	[{\includegraphics[width=1in,height=1.25in,clip,keepaspectratio]{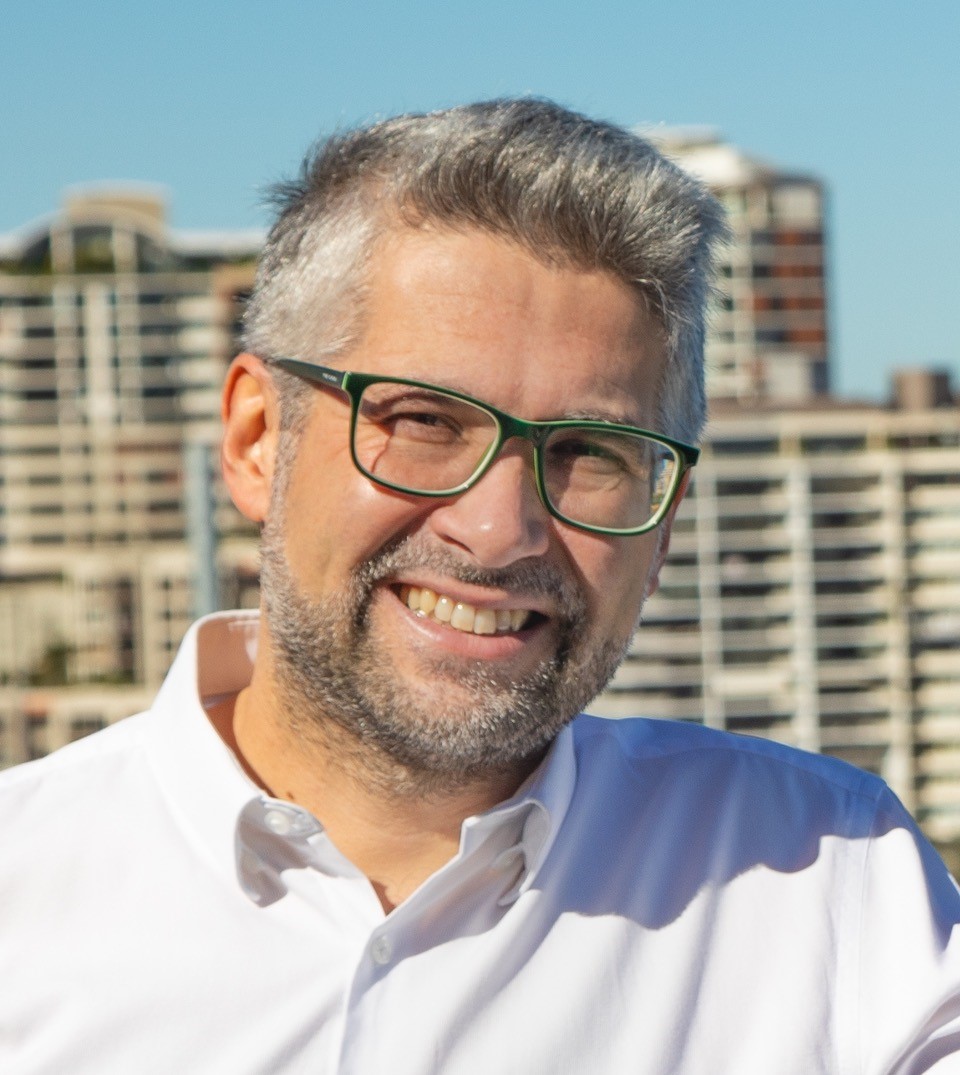}}]{Daniel E. Quevedo}
(S'97--M'05--SM'14--F'21) received Ingeniero Civil Electr\'onico
and M.Sc.\ degrees from   Universidad
T\'ecnica Federico Santa Mar\'{\i}a, Valpara\'{\i}so, Chile, in 2000, and in 2005   the Ph.D.\ degree from the University of Newcastle, Australia.
He is Professor of Cyberphysical Systems  
at the School of Electrical Engineering and Robotics, Queensland University of Technology (QUT), 
in Australia.  
Before joining QUT, he established and led  the Chair in Automatic
Control 
at Paderborn University, Germany. 
In 2003 he 
received the IEEE Conference on Decision and
Control Best Student Paper Award  and was also a finalist  in 
2002.  
He is co-recipient of the  2018 IEEE Transactions on Automatic Control George S.\ Axelby Outstanding Paper Award. 
\par Prof.\ Quevedo currently serves as Associate Editor for   \emph{IEEE Control Systems} and in  the Editorial Board of the \emph{International Journal of Robust and Nonlinear Control}.  From 2015--2018 he was Chair of the IEEE Control Systems Society \emph{Technical Committee on 
	Networks \& Communication Systems}. 
His research interests are in networked control systems, control of power converters and cyberphysical systems security.
\end{IEEEbiography}
\begin{IEEEbiography}
	[{\includegraphics[width=1in,height=1.25in,clip,keepaspectratio]{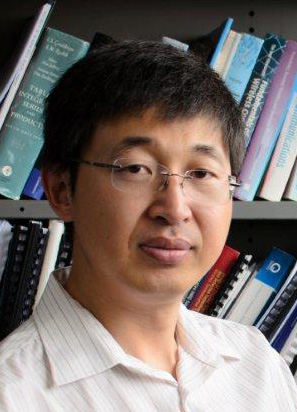}}]{Yonghui Li}
	(M'04-SM'09-F'19) received his PhD degree in November 2002 from Beijing University of Aeronautics and Astronautics. From 1999 – 2003, he was affiliated with Linkair Communication Inc, where he held a position of project manager with responsibility for the design of physical layer solutions for the LAS-CDMA system. Since 2003, he has been with the Centre of Excellence in Telecommunications, the University of Sydney, Australia. He is now a Professor in School of Electrical and Information Engineering, University of Sydney. He is the recipient of the Australian Queen Elizabeth II Fellowship in 2008 and the Australian Future Fellowship in 2012. 
	His current research interests are in the area of wireless communications, with a particular focus on MIMO, millimeter wave communications, machine to machine communications, coding techniques and cooperative communications. He holds a number of patents granted and pending in these fields. He is now an editor for IEEE transactions on communications and IEEE transactions on vehicular technology. He also served as a guest editor for several special issues of IEEE journals, such as IEEE JSAC special issue on Millimeter Wave Communications. He received the best paper awards from IEEE International Conference on Communications (ICC) 2014, IEEE PIMRC 2017 and IEEE Wireless Days Conferences (WD) 2014.	
\end{IEEEbiography}
\begin{IEEEbiography}
	[{\includegraphics[width=1in,height=1.25in,clip,keepaspectratio]{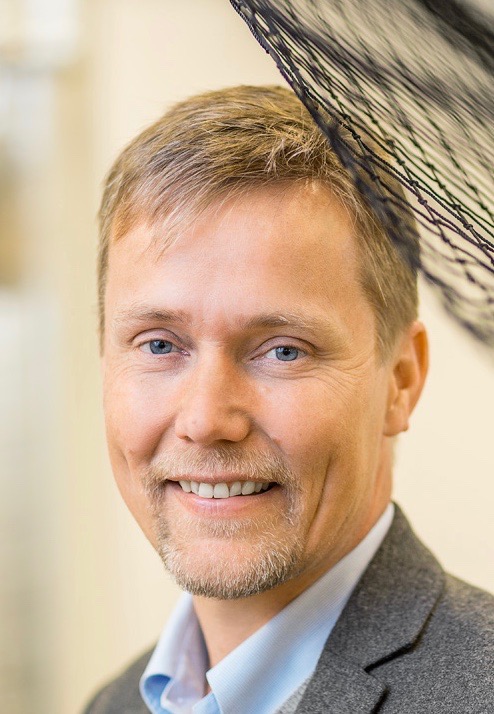}}]{Karl Henrik Johansson} (SM'08-F'13)
is Professor at the School
of Electrical Engineering and Computer Science,
KTH Royal Institute of Technology, Sweden. He received
MSc and PhD degrees from Lund University.
He has held visiting positions at UC Berkeley, Caltech,
NTU, HKUST Institute of Advanced Studies,
and NTNU. His research interests are in networked
control systems, cyber-physical systems, and applications
in transportation, energy, and automation. He
is a member of the Swedish Research Council's Scientific
Council for Natural Sciences and Engineering
Sciences. He has served on the IEEE Control Systems Society Board of
Governors, the IFAC Executive Board, and is currently Vice-President of the
European Control Association Council. He has received several best paper
awards and other distinctions from IEEE, IFAC, and ACM. He has been
awarded Distinguished Professor with the Swedish Research Council and
Wallenberg Scholar with the Knut and Alice Wallenberg Foundation. He has
received the Future Research Leader Award from the Swedish Foundation for
Strategic Research and the triennial Young Author Prize from IFAC. He is
Fellow of the IEEE and the Royal Swedish Academy of Engineering Sciences,
and he is IEEE Control Systems Society Distinguished Lecturer.
	
\end{IEEEbiography}
\begin{IEEEbiography}
	[{\includegraphics[width=1in,height=1.25in,clip,keepaspectratio]{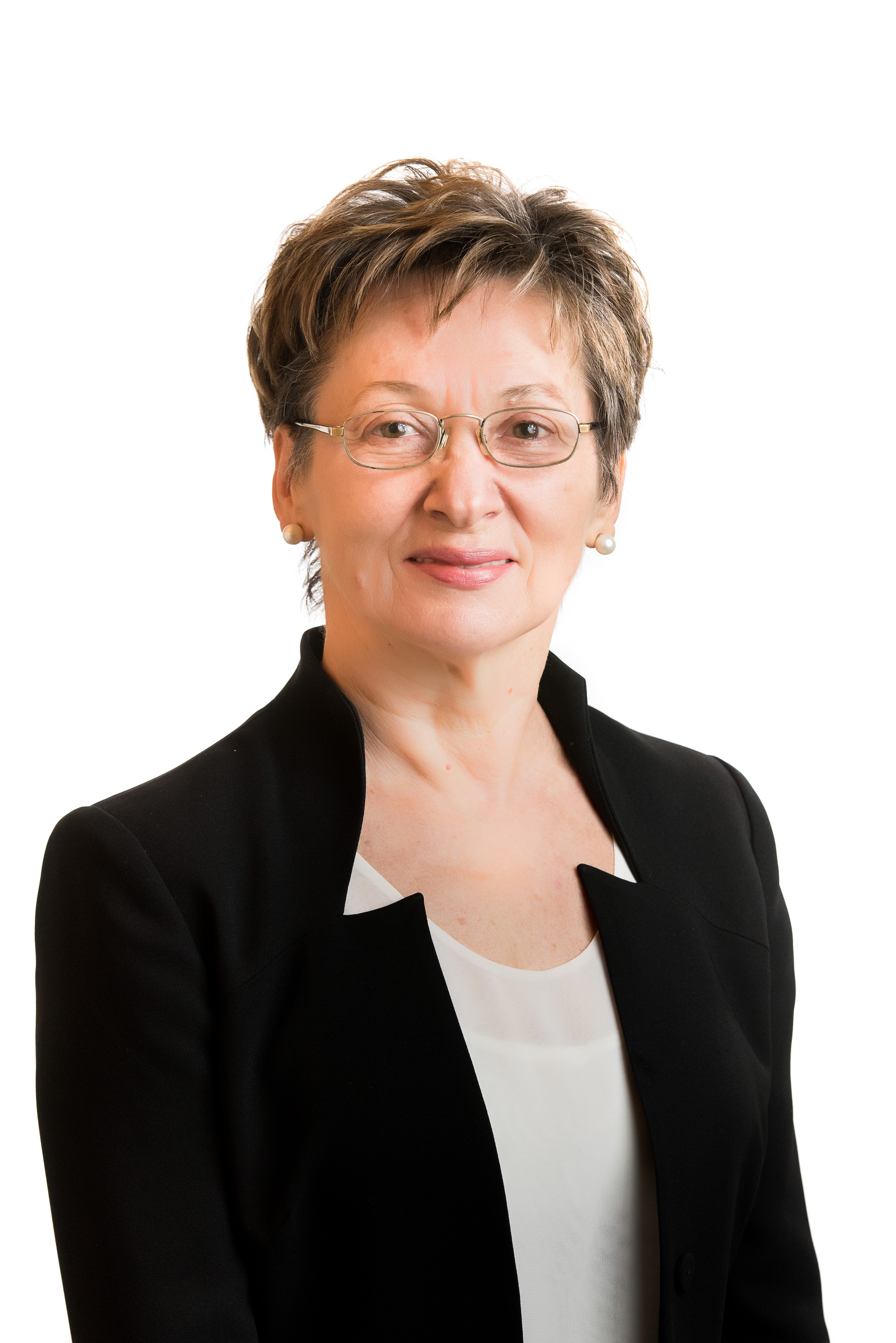}}]{Branka Vucetic}
(F'03) received the BSEE, MSEE and PhD degrees in 1972, 1978 and 1982, respectively, in Electrical Engineering, from The University of Belgrade, Belgrade. She is an ARC Laureate Fellow and Director of the Centre of Excellence for IoT and Telecommunications at the University of Sydney. Her current work is in the areas of wireless networks and Internet of Things. In the area of wireless networks, she explores ultra-reliable, low-latency techniques and transmission in millimetre wave frequency bands. In the area of the Internet of things, Vucetic works on providing wireless connectivity for mission critical applications. Branka Vucetic is a Fellow of IEEE, the Australian Academy of Science and the Australian Academy of Technological Sciences and Engineering.
\end{IEEEbiography}

\end{document}